\begin{document}

\conferenceinfo{ICFP 2011}{September 19--21, 2011, Tokyo, Japan.} 
\copyrightyear{2011} 
\copyrightdata{[to be supplied]} 

\titlebanner{DRAFT--- Do not distribute}        

\title{Stepping Lazy Programs}


\def\authpiece#1#2{\small
\begin{tabular}{c}
#1\\
#2
\end{tabular}}

\def\neu{Northeastern University} 

\authorinfo{
 \authpiece{Stephen Chang}{\neu}
 \authpiece{Eli Barzilay}{\neu}
 \authpiece{John Clements}{California Polytechnic State University}
 \authpiece{Matthias Felleisen}{\neu}}
 {\relax}
 {\small Communicating Author: stchang@ccs.neu.edu}

\maketitle

\begin{abstract}
Debugging lazy functional programs poses serious challenges. In support of the
``stop, examine, and resume'' debugging style of imperative languages, some
debugging tools abandon lazy evaluation. Other debuggers preserve laziness but
present it in a way that may confuse programmers because the focus of
evaluation jumps around in a seemingly random manner.

In this paper, we introduce a supplemental tool, the algebraic program
stepper. An algebraic stepper shows computation as a mathematical
calculation. Algebraic stepping could be particularly useful for novice
programmers or programmers new to lazy programming. Mathematically speaking, an
algebraic stepper renders computation as the standard rewriting sequence of a
lazy $\lambda$-calculus. Our novel lazy semantics introduces lazy evaluation as
a form of parallel program rewriting. It represents a compromise between
Launchbury's store-based semantics and a simple, axiomatic description of lazy
computation as sharing-via-parameters, \`{a} la Ariola et al. Finally, we prove
that the stepper's run-time machinery correctly reconstructs the standard
rewriting sequence.


\end{abstract}

\category{D.3.1}{Formal Definitions and Theory}{Semantics}
\category{D.3.2}{Language Classification}{Applicative (functional) languages}
\category{D.3.3}{Processors}{Debuggers}

\terms lazy programming; debugging and stepping; lazy lambda calculus


\section{How Functional Programming Works}

\label{sec:intro}

\ncite{Hughes1989WhyFPMatters} explains why functional programming matters. By
``functional programming'' Hughes specifically means ``lazy'' functional
programming, and by ``matters'' he refers to the distinct advantages of
laziness for programming with reusable components, i.e., functions and
programs. Hughes's examples demonstrate the ease of creating functions and
programs by gluing together existing functions and programs. The key advantage
of laziness is that even if one component appears to produce too much data, the
laziness of the consuming component almost always naturally eliminates the
superfluous data production. Since then, numerous researchers have extended
this argument with examples of their own, usually accompanied by elegant code
in Haskell.

Unfortunately, laziness also increases the distance between the programmer and
the underlying machinery. Specifically, laziness reduces a programmer's ability
to predict when certain expressions are evaluated during program execution. As
long as things work, this cognitive dissonance poses no problems. When a
program exhibits erroneous behavior, however, programmers are often at a
loss. A programmer can turn to a debugger for help, but the evaluation of lazy
programs is often confusing enough that lazy debuggers resort to hiding
laziness from the programmer in order to display useful
information~\cite{Wallace2001NewHat,Ennals2003HsDebug,Allwood2009Needle}.

An ideal debugger should not modify the execution model of a program. The
maintainers of the Glasgow Haskell Compiler~\cite{GHC} share this sentiment,
since the bundled GHCi debugger abides by this ideal. The authors of the GHCi
debugger~\cite{Marlow2007Debugger} state that their debugger ``lets the
programmer see the effects of laziness,'' and therefore, ``shows the programmer
what is actually happening in their program at runtime''. Unfortunately, the
authors also acknowledge that their debugger presents lazy computations in a
way that is difficult to follow, mostly through seemingly random jumps from one
place to another in the program.

To mitigate the drawbacks of existing lazy debuggers, we propose a
supplementary tool, the algebraic stepper. PLT's
DrRacket~\cite{fcffksf:drscheme} comes with such a stepper for the
call-by-value Racket\footnote{DrRacket and Racket were formerly known as
  DrScheme and PLT Scheme, respectively.} language. Given a functional program,
the algebraic stepper displays its standard reduction sequence in the by-value
$\lambda$-calculus~\cite{cff:stepper}. Our experience suggests that this kind
of tool especially benefits novice programmers when they try to understand
small programs, and programmers who are new to the language and are trying to
explore some linguistic feature.

DrRacket's stepper presents evaluation of a program directly as a manipulation
of the source, similar to the calculated manipulations of a student of
mathematics. We conjecture that this model is suitable for addressing the
confusing nature of lazy evaluation. In this paper, we present: (1) our stepper
for Lazy Racket~\cite{bc:lazy-scheme}; (2) its underlying semantics, a novel
lazy $\lambda$-calculus; (3) and a proof that the stepper correctly implements
the standard rewriting semantics of the calculus. Roughly speaking, Lazy Racket
is a call-by-need language that uses the same evaluation mechanism as
Haskell.\footnote{The implementation of Lazy Racket is similar to SRFI 45 of
  the Scheme Standard.} A Lazy Racket module macro-expands its surface syntax
into a plain Racket program enriched with appropriate \delayname and \forcename
constructs~\cite{Hatcliff1997Thunks}. Lazy Racket is mostly used in educational
settings---where we have tested the prototype of the stepper so far---though
some programmers have found it useful to construct parser combinators or game
trees in Lazy Racket modules, and then to export such pieces to plain Racket
modules.

The novel lazy $\lambda$-calculus introduces the idea of selective parallel
reduction to simulate shared reductions. On the one hand, it is nearly trivial
to prove an equivalence between the standard rewriting semantics of our
calculus and a Launchbury-style, store-based
semantics~\cite{Launchbury1993NaturalSemantics}. On the other hand, the
calculus is the appropriate basis for a correctness proof of the stepper. For
the correctness proof, we construct a model of the \delayname-and-\forcename
implementation, further enriched with continuation marks~\cite{cff:stepper},
show that it bisimulates the standard rewriting semantics, and finally, exploit
Clements's strategy for the rest of the proof~\cite{Clements2006Thesis}.

Section~2 briefly introduces Lazy Racket and its stepper. Section~3 presents a
novel lazy $\lambda$-calculus and its essential theorems. Section~4 summarizes
the essential details of the implementation of Lazy Racket and presents a model
of the lazy stepper. Section~5 presents a correctness proof for the stepper and
the penultimate section summarizes related work.

\section{Lazy Racket and Its Stepper} 
\label{sec:examples}
%

Lazy Racket programs are sequences of \texttt{define}s and expressions that
usually refer to the definitions. Here is a basic example:
\begin{alltt}
        (define (f x) (+ x x))
        (f (+ 1 (+ 2 3)))
\end{alltt}
A programmer invokes the Lazy Racket stepper from the DrRacket IDE. Running the
stepper displays the reduction sequence for the current
program. Figure~\ref{fig:example1} shows a sequence of screenshots stepping
through the above program, with each shot displaying one reduction step.
\begin{figure}[htb]
  \includegraphics[width=3.3in]{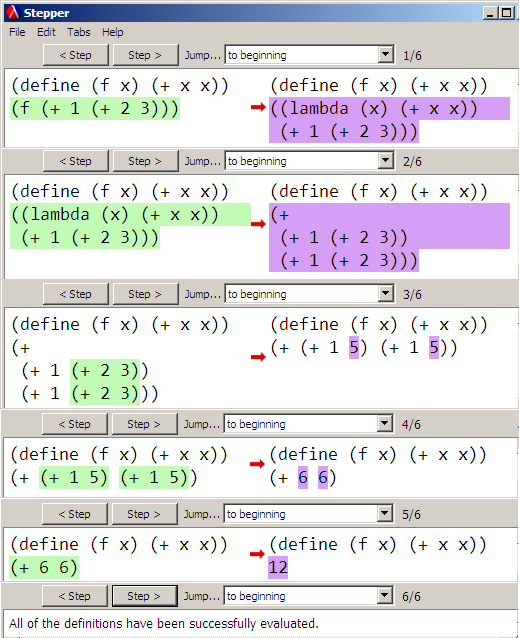}
  \caption{Lazy Stepper Example 1}
  \label{fig:example1}
\end{figure}
A green box highlights the redex(es) on the left-hand side of a reduction step
and a purple box highlights the contractum(s) on the right-hand side. The
programmer can navigate the reduction sequence in either the forward or
backward direction. Additional navigation features are in the planning stages.

In step~2, evaluation of the function argument is delayed so an unevaluated
argument replaces each instance of the variable \texttt{x} in the function
body. In step~3, evaluation of the program at the first \texttt{x} position
requires the value of the argument, so the argument is forced in
steps~3~and~4. In steps~3~and~4, all shared instances of the argument are
reduced simultaneously. That is, \emph{the stepper explains evaluation as an
  algebraic process using a form of parallel reduction}. Since the second
\texttt{x} refers to the same delayed computation as the first \texttt{x}, by
the time evaluation of the program requires a value at the second \texttt{x}
position, a result is already available because the computed value of the first
\texttt{x} was saved. In short, no argument evaluation is repeated.
%

A second example introduces the lazy \takename function, which extracts the
first \texttt{n} elements of a specified list:
\begin{alltt}
  (define (take! n lst)
    (if (= n 0) 
        null
        (cons (first lst) 
              (take! (- n 1) (rest lst)))))
  (define (f lst) (+ (first lst) (second lst)))
  (f (take! 3 (list 1 2 (/ 1 0) 4)))
\end{alltt}
The reduction sequence for this program, as viewed in the stepper, appears in
figure~\ref{fig:example2}. For space reasons, only interesting steps are
shown.
\begin{figure*}[htb]
  \includegraphics[width=7in]{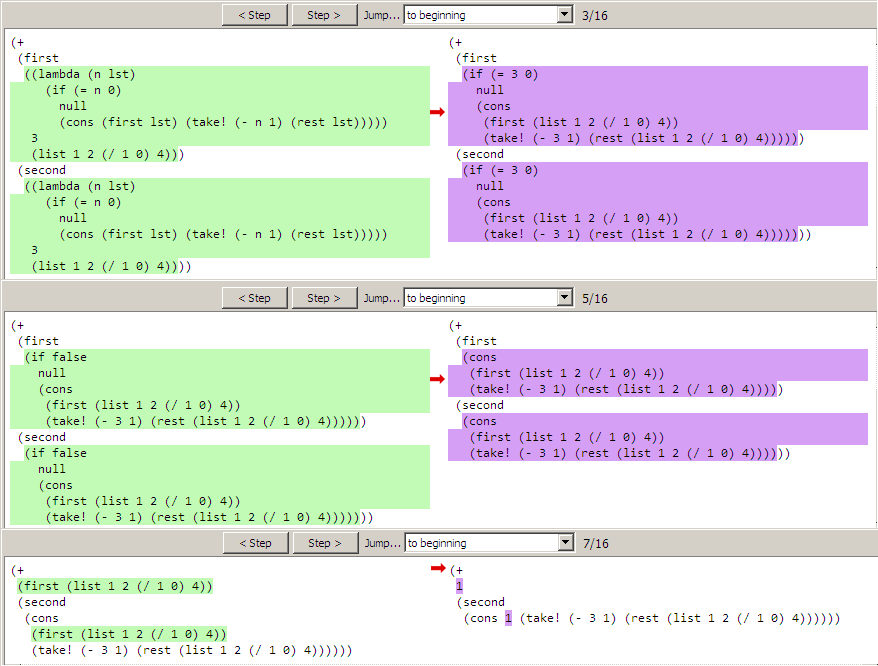}
  \caption{Lazy Stepper Example 2}
  \label{fig:example2}
\end{figure*}

In this example, the result of the \takename computation is the argument to the
function \texttt{f}. The \takename computation extracts the first three
elements of its list argument, but \texttt{f} only uses the first two list
elements, so the third element, which produces an error, should not be
forced. In step~3, the \takename computation is forced because both \texttt{+}
and \carname are strict. In Lazy Racket, \consname behaves lazily and does not
evaluate its arguments~\cite{Friedman1976Cons}, so in step~5, the result of the
\takename computation is a \consname with two thunks: one that retrieves the
first element of the list, and one that contains the next iteration of the
\takename computation. In step~7, the first addition operand is finally reduced
to a value. Notice that the first element in the argument to \texttt{second} is
already reduced as well. The remaining steps force the next iteration of
\texttt{take!} and similarly extract the second element of the list. Since only
the first two elements of the list are needed, no additional \takename
iterations are computed and the division by zero never raises an error.
%

A third example involves infinite lists:
\begin{alltt}
    (define (add-one x) (+ x 1))
    (define nats (cons 1 (map add-one nats)))
    (+ (second nats) (third nats))
\end{alltt}
More importantly, it involves \texttt{map}, which the stepper has not annotated
because it is a library function. The reduction sequence for this program
appears in figure~\ref{fig:example3}. Again, some function definitions and
reduction steps have been elided from the screenshots.
\begin{figure*}[htb]
  \includegraphics[width=7in]{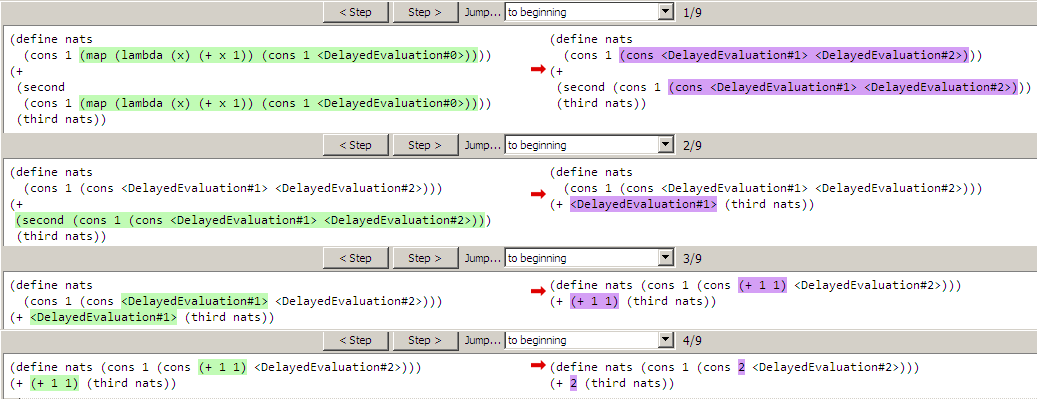}
  \caption{Lazy Stepper Example 3}
  \label{fig:example3}
\end{figure*}
In step~1, the evaluation of \texttt{second} forces the \texttt{map} expression
to a \consname containing two thunks. Unlike the second example, the thunks are
displayed as $\thunk{1}$ and $\thunk{2}$ because their contents are unknown,
i.e., they were not part of the source program. In step~2, the \texttt{second}
expression extracts $\thunk{1}$ from the list, but the thunk is still
unevaluated. In steps~3~and~4, evaluation of the program requires the value of
$\thunk{1}$, so it is forced. Observe how the stepper updates the \texttt{nats}
definition with the result as well. The remaining steps show the similar
evaluation of the other addition operand and are thus omitted.
%

As a final example, we use our stepper to understand the behavior of a program
presented by \ncite{Marlow2007Debugger}:
\begin{verbatim}


;; [Listof Char] -> [Listof [Listof Char]]
(define (lines s)
 (cond
   [(null? s) null]
   [else 
    (define-values (l t) 
      (break (lambda (x) (equal? "\n" x)) s))
    (cons l (cond
              [(null? t) null]
              ; drop "\n" char and recur
              [else (lines (cdr t))]))]))

;; [Char -> Boolean] [Listof Char] 
;;  ->* [Listof Char] [Listof Char]
(define (break p? l)
  (let L ([l l] [line null])
    (cond
      [(null? l) (values (reverse line) null)]
      [(p? (car l)) (values (reverse line) l)]
      [else (L (cdr l) (cons (car l) line))])))
\end{verbatim}
The \breakname function consumes two arguments, a predicate on characters and a
string represented as a list of characters, and splits the string at the first
character for which the predicate is true, returning two substrings
simultaneously.\footnote{The \texttt{values} Racket construct enables multiple
  return values.} The delimiting character remains as the first character of
the second substring. The \linesname function uses \breakname to separate an
input string into lines, where a \verb!"\n"! character begins a new
line. Unlike \breakname, the delimiting character is not included in the output
of \linesname.

Evaluating the expression \verb!(lines '("\n" "a"))! produces the expected
result, an empty line (the empty string) and a line with one \verb!"a"!
  character (the \texttt{!!}  function is a recursive \forcename function):
\begin{verbatim}
    > (!! (lines '("\n" "a")))
    '(() ("a"))
\end{verbatim}
However, \verb!(lines '("a" "\n"))! produces only one line:
\begin{verbatim}
    > (!! (lines '("a" "\n")))
    '(("a"))
\end{verbatim}
\ncite{Marlow2007Debugger} show how to use the GHCi debugger to understand this
behavior. Figure~\ref{fig:example4} demonstrates that the stepper provides a
superior vehicle in this situation. It specifically shows how \texttt{break}
returns two values, which when evaluated, produce a one-character string
\verb!"a"! and a one-character string \verb!"\n"!, respectively. From the
remainder of the definition of \linesname, we can deduce that the \verb!"a"!
string is retained, while the \verb!"\n"! is dropped by the subsequent call to
\texttt{cdr}, causing the recursive \linesname call to return an empty list,
i.e. no lines, thus explaining the missing line. If we want \linesname to
output a final, empty line when there is a trailing \verb!"\n"! in the input,
the base case must return a list with an empty line, \verb!'(())!, instead of
just an empty list.
\begin{figure*}[htb]
  \includegraphics[width=7in]{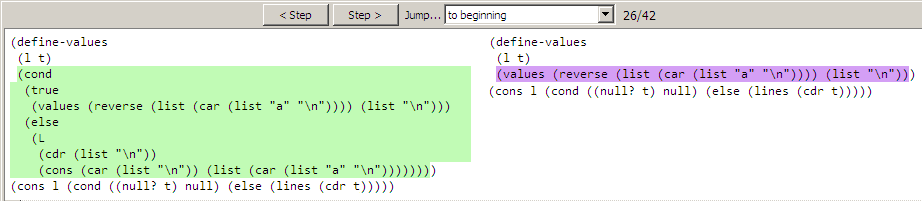}
  \caption{Lazy Stepper Example 4}
  \label{fig:example4}
\end{figure*}
\section{Lazy Racket Semantics}

\label{sec:lambdalr}

Our key theoretical innovation is the novel semantic view of laziness displayed
in our stepper. Following tradition, we present this idea in the context of a
$\lambda$-calculus, \lambdaLR:
\begin{align*}
e =&\;n \mid s \mid b \mid x \mid \lam{x}{e} \mid \app{e}{e} 
        \mid \prim{p^2}{e}{e} \\
 &\mid \cons{e}{e} \mid \nul \mid \app{p^1}{e} \mid \iif{e}{e}{e} \\
n \in&\; \mathbb{Z},\; s \in \textrm{Strings},\;b = \true \mid \false \\
p^2 =&\; + \mid - \mid * \mid / \hspace{0.5cm} p^1 = \carname \mid \cdrname 
\end{align*}
The syntax of \lambdaLR is identical to the core of most functional programming
languages and includes integers, strings, booleans, variables, abstractions,
applications, primitives, lists, and a conditional.

To specify the semantics of \lambdaLR, we first extend $e$ by adding a new
expression:
\begin{align*}
  \eLR &= e \mid \labdef{\eLR} \hspace{0.5cm} \ell \in \textrm{Labels}
\end{align*}
The ``labeled'' expression, \lab{\ell}{\eLR}, consists of a tag $\ell$ and a
subexpression $\eLR$. Labeled expressions are not part of the language syntax
but are necessary for evaluation. Rewriting a labeled expression triggers the
simultaneous rewriting of all other expressions that share the same
label. Otherwise, labeled expressions do not affect program evaluation. The
stepper renders labeled expressions without the label.

We require one constraint on labeled expressions in our language: all
expressions with the same label $\ell$ must be identical. We call this the
consistent labeling property:
\begin{definition}
\label{prop:samelab}
A program is consistently labeled if, for all $\ell_1$, $\ell_2$, $\eLR_1$,
$\eLR_2$, if $\lab{\ell_1}{\eLR_1}$ and $\lab{\ell_2}{\eLR_2}$ are two
subexpressions in a program, and $\ell_1 = \ell_2$, then $\eLR_1 = \eLR_2$.
\end{definition}
%
%
%
%
\begin{figure*}[htp]
  \begin{center}
    \begin{tabular}{c||c|c||c}
\hline
      \multicolumn{3}{l}{\hspace{1.5cm} LHS 
                         \hspace{1.5cm} $\onesteplr$ 
                         \hspace{2.4cm} RHS} & \\
\hline
& $\nexists E_1,E_2,\ell$
& $\inhole{E}{\:} = \inhole{E_1}{\labdef{(\inhole{E_2}{\:})}}$ & \\
& s.t. $\inhole{E}{\:} = \inhole{E_1}{\labdef{(\inhole{E_2}{\:})}}$ 
& $\nexists E_3,E_4,\ell'$ 
s.t. $\inhole{E_2}{\:} = \inhole{E_3}{\lab{\ell'}{(\inhole{E_4}{\:})}}$ & \\
& (redex not under label) & (redex occurs under label) & \\
\hline
      \makerule{\app{\labvecdef{\lamp{x}{\eLR_1}}}{\eLR_2}}
               {\subst{\eLR_1}{x}{\lab{\ell_1}{\eLR_2}}}
               {\betalr} \\
      & $\fresh{\ell_1}$ & $\fresh{\ell_1}$ & \\[2pt]
      \makerule{\prim{p^2}{\labvecdef{n_1}}
                         {\labvecdef{n_2}}}
               {\delt{\prim{p^2}{n_1}{n_2}}}
               {\primlr} \\[2pt]
      \makerule{\cons{\eLR_1}{\eLR_2}}
               {\cons{\lab{\ell_1}{\eLR_1}}
                     {\lab{\ell_2}{\eLR_2}}}
               {\conslr} \\
      $\eLR_1 \textrm{ unlabeled or } \eLR_2 \textrm{ unlabeled}$
      & $\fresh{\ell_1,\ell_2}$ & $\fresh{\ell_1,\ell_2}$ & \\[2pt]
      \makerule{\car{\labvecdef{\cons{\eLR_1}{\eLR_2}}}}
               {\eLR_1}
               {\carlr} \\[2pt]
      \makerule{\cdr{\labvecdef{\cons{\eLR_1}{\eLR_2}}}}
               {\eLR_2}
               {\cdrlr} \\[2pt]
      \makerule{\iif{\labvecdef{\true}}{\eLR_1}{\eLR_2}}
               {\eLR_1}
               {\iftruelr} \\[2pt]
      \makerule{\iif{\labvecdef{\false}}{\eLR_1}{\eLR_2}}
               {\eLR_2}
               {\iffalselr} \\[2pt]
\hline
    \end{tabular}
  \end{center}
  \caption{The \lambdaLR Reduction System.}
  \label{fig:lr}
\end{figure*}

\subsection{Rewriting Rules}

To further formulate a semantics, we define the notion of values:
\begin{align*}
v = n \mid s \mid b \mid \lam{x}{\eLR} \mid \nul 
      \mid \cons{\labdef{\eLR}}{\labdef{\eLR}} \mid \lab{\ell}{v}
\end{align*}
Numbers, strings, booleans, abstractions, and \nulname are values. In addition,
\consname expressions where each element is labeled are also values. Finally,
any value tagged with labels is also a value.

In the rewriting of \lambdaLR programs, evaluation contexts are used to
determine which part of the program to rewrite next. Evaluation contexts are
expressions where a hole $\holeE$ replaces one subexpression:
\begin{align*}
E = \holeE &\mid \app{E}{\eLR} \mid 
      \prim{p^2}{E}{\eLR} \mid \prim{p^2}{v}{E} \mid \app{p^1}{E} \\
      &\mid \iif{E}{\eLR}{\eLR} \mid \lab{\ell}{E} 
\end{align*}
The $\app{E}{\eLR}$ context indicates that the operator in an application is
evaluated before it is applied. The $p^1$ and $p^2$ contexts indicate that these
primitives are strict in all argument positions. The \ifname context dictates
strict evaluation of only the test expression. Finally, the \labdef{E} context
dictates that a redex search goes under labeled expressions. Essentially, when
searching for a redex, expressions tagged with a label are treated as if they
were unlabeled.

Evaluation of a \lambdaLR program proceeds according to the program rewriting
system in figure~\ref{fig:lr}. For each possible rewriting step, the program in
the first column is partitioned into a redex and a context, and is rewritten to
either the program in the second or the third column. In the second and third
columns, the redex is always contracted. If the redex does not occur under a
label, then it is the only contracted part of the program (column two). If the
redex occurs under a label, all other instances of the label are similarly
contracted (column three). In the third column, the context is further
subdivided as $\inhole{E}{\:} = \inhole{E_1}{\labdef{(\inhole{E_2}{\:})}}$
where $\ell$ is the label nearest the redex, $E_1$ is the context around the
$\ell$-labeled expression, and $E_2$ is the context under label $\ell$
surrounding the redex. Thus $E_2$ contains no additional labels. An ``update''
function is used to perform the parallel reduction. The update function uses
the notation $\substlab{\eLR_1}{\ell}{\eLR_2}$ to mean that all expressions in
$\eLR_1$ immediately under a label $\ell$ are replaced with $\eLR_2$. The
function is formally defined in figure~\ref{fig:update}. The last clause in the
definition covers all cases not included by the preceding clauses. 
%
%
%
\begin{figure}[htb]
  \begin{align*}
    \replacelabeledexpr{\lab{\ell}{\eLR_1}}{\ell}{\eLR_2}    
        &= \lab{\ell}{\eLR_2} \\
    \replacelabeledexpr{\lab{\ell_1}{\eLR_1}}{\ell_2}{\eLR_2}
        &= \lab{\ell_1}{(\replacelabeledexpr{\eLR_1}{\ell_2}{\eLR_2})}, 
        \; \ell_1 \neq \ell_2  \\
    \replacelabeledexpr{\lamp{x}{\eLR_1}}{\ell}{\eLR_2} &=
      \lam{x}{(\replacelabeledexpr{\eLR_1}{\ell}{\eLR_2})} \\
    \replacelabeledexpr{\app{\eLR_1}{\eLR_2}}{\ell}{\eLR_3} &= 
      (\replacelabeledexpr{\eLR_1}{\ell}{\eLR_3} \\
       &\hspace{5.5mm}\replacelabeledexpr{\eLR_2}{\ell}{\eLR_3}) \\
    \replacelabeledexpr{\prim{p^2}{\eLR_1}{\eLR_2}}{\ell}{\eLR_3} &= 
      (p^2\;\replacelabeledexpr{\eLR_1}{\ell}{\eLR_3} \\
               &\hspace{10mm}\replacelabeledexpr{\eLR_2}{\ell}{\eLR_3}) \\
    \replacelabeledexpr{\cons{\eLR_1}{\eLR_2}}{\ell}{\eLR_3} &= 
      (\texttt{cons}\;\replacelabeledexpr{\eLR_1}{\ell}{\eLR_3} \\
           &\hspace{13.5mm}\replacelabeledexpr{\eLR_2}{\ell}{\eLR_3}) \\
    \replacelabeledexpr{\app{p^1}{\eLR_1}}{\ell}{\eLR_2} &=
      \app{p^1}{\replacelabeledexpr{\eLR_1}{\ell}{\eLR_2}} \\
    \replacelabeledexpr{\iif{\eLR_1}{\eLR_2}{\eLR_3}}{\ell}{\eLR_4} &= 
      (\texttt{if}\;\replacelabeledexpr{\eLR_1}{\ell}{\eLR_4} \\
      &\hspace{10mm}\replacelabeledexpr{\eLR_2}{\ell}{\eLR_4} \\
      &\hspace{10mm}\replacelabeledexpr{\eLR_3}{\ell}{\eLR_4}) \\
    \textrm{otherwise, }\replacelabeledexpr{\eLR_1}{\ell}{\eLR_2} &= \eLR_1
  \end{align*}
  \caption{Definition of parallel update function.}
  \label{fig:update}
\end{figure}

The $\betalr$ rule specifies that function application occurs before the
evaluation of arguments. To remember where expressions originate, the argument
receives a label $\ell_1$ before substitution is performed. The notation
$\lab{\vec{\ell}}{\eLR}$ represents an expression wrapped in one or more
labels. During a rewriting step, labels are discarded from values because no
further reduction is possible.

Binary primitive applications are strict in their arguments, as seen in the
\primlr rule. The $\delta$ function interprets binary primitive applications
and is defined in the standard way (division by 0 results in a stuck state).

The \conslr rule shows that, if either argument is unlabeled, both arguments
are wrapped with labels. Adding an extra label around an already labeled
expression will not change the rewriting sequence of the program because the
parallel updating function only uses the innermost label. The \carlr and \cdrlr
rules extract the first and second components from a \consname cell,
respectively, and the \iftruelr and \iffalselr rules similarly choose the first
or second branch of the \ifname expression.

Program rewriting preserves the consistent labeling property.
\begin{lemma}
\label{lem:samelab}
If $\eLR_1$ is consistently labeled and $\eLR_1 \onesteplr \eLR_2$, then
$\eLR_2$ is consistently labeled.
\end{lemma}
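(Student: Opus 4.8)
The plan is to argue by a case analysis on the rewriting rule of Figure~\ref{fig:lr} used to justify $\eLR_1 \onesteplr \eLR_2$, and, within each rule, on whether the contracted redex occurs under a label (column three) or not (column two). In every case I will describe how the labeled subexpressions of $\eLR_2$ arise from those of $\eLR_1$ and then verify, for each label $\ell$, that all subexpressions of $\eLR_2$ tagged with $\ell$ are syntactically identical, i.e.\ that Definition~\ref{prop:samelab} is re-established. It is convenient to record three elementary facts first: (a) deleting occurrences of labeled subexpressions---as the projection rules $\carlr$, $\cdrlr$, $\iftruelr$, $\iffalselr$ and the discarding of value labels in $\primlr$ do---can never violate consistency, since consistency is a universally quantified condition over the surviving occurrences; (b) a \emph{fresh} label introduced by $\betalr$ or $\conslr$ cannot clash with any pre-existing label, so its new occurrences need only be consistent among themselves; and (c) substitution $\subst{\eLR}{x}{\cdot}$ copies one and the same expression into every hole, so although it may multiply the occurrences of a label, all copies are identical and consistency is preserved.

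For the column-two cases the redex sits in a label-free position, so only the single highlighted subexpression changes and the three facts above suffice. The only rule that creates genuinely new sharing is $\betalr$: the argument is wrapped in a fresh label $\lab{\ell_1}{\eLR}$ and then substituted, so $\ell_1$ is fresh (fact (b)) and all its copies are equal (fact (c)); every other label is either untouched in the function body or merely duplicated from the argument, and for those I invoke (c) together with the consistency of $\eLR_1$ to conclude that old and new occurrences still agree. The $\conslr$, $\primlr$, and projection rules are then handled directly by (a) and (b).

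The substantive case is column three, where the update function of Figure~\ref{fig:update} performs the parallel contraction $\eLR_2 = \substlab{\eLR_1}{\ell}{\inhole{E_2}{c}}$, with $\ell$ the label nearest the redex, $E_2$ the label-free context beneath it, and $c$ the contractum. Here the consistency hypothesis does the essential work: because $\eLR_1$ is consistently labeled, every $\ell$-tagged subexpression is the \emph{same} expression $\lab{\ell}{\inhole{E_2}{r}}$, so replacing the body under each such label by the common expression $\inhole{E_2}{c}$ yields identical results $\lab{\ell}{\inhole{E_2}{c}}$ everywhere, and consistency for $\ell$ is immediate. For a label $\ell' \neq \ell$ I classify its occurrences in $\eLR_2$ into those inherited from $\eLR_1$ but lying outside the replaced $\ell$-bodies---whose contents are rewritten uniformly by $\substlab{\cdot}{\ell}{\inhole{E_2}{c}}$ and hence remain equal to one another---and those created inside the freshly inserted common subterm $\inhole{E_2}{c}$, which are literal copies of a single expression and are therefore equal to one another as well.

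The crux, and the step I expect to be the main obstacle, is the \emph{cross}-consistency between these two families: I must show that an inherited $\ell'$-occurrence, whose body $g$ has been rewritten to $\substlab{g}{\ell}{\inhole{E_2}{c}}$, still matches the verbatim $\ell'$-occurrence sitting inside $\inhole{E_2}{c}$. Because the update is not applied recursively to the material it inserts, these agree precisely when $g$ contains no $\ell$-label, so the argument reduces to ruling out a copy of $\ell$ nested inside an $\ell'$-body that is itself nested inside $\ell$. Consistency forces any such inner $\ell$-body to again equal $\lab{\ell}{\inhole{E_2}{r}}$, which would make $\lab{\ell}{\inhole{E_2}{r}}$ a proper subexpression of itself; a size argument, using finiteness of expressions, excludes this, so $g$ is $\ell$-free and the two families coincide. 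I will then assemble the cases to conclude that $\eLR_2$ satisfies Definition~\ref{prop:samelab}.
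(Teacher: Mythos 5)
Your overall architecture---case analysis on the rule, the freshness fact for $\betalr$/$\conslr$, and especially the column-three crux---is sound, and your self-containment (size) argument ruling out an $\ell$-label nested inside an $\ell'$-body inside an $\ell$-body is exactly the right move there; note, for calibration, that the paper states this lemma without giving any proof at all, so you are not deviating from a published argument. However, there is a genuine gap in your $\betalr$ case, in the claim that ``every other label is either untouched in the function body or merely duplicated from the argument.'' Substitution descends under labels, so an $\ell'$-labeled subexpression of the function body whose body contains the bound variable $x$ \emph{is} touched, while occurrences of $\ell'$ outside the redex are not. Taken literally, the lemma is then false: the program $\app{(\lamp{x}{\lab{\ell}{x}})}{(\lamp{x}{\lab{\ell}{x}})}$ is consistently labeled (both $\ell$-bodies are $x$), yet a column-two $\betalr$ step yields $\subst{\lab{\ell}{x}}{x}{\lab{\ell_1}{(\lamp{x}{\lab{\ell}{x}})}} = \lab{\ell}{\lab{\ell_1}{(\lamp{x}{\lab{\ell}{x}})}}$, whose two $\ell$-bodies ($\lab{\ell_1}{(\lamp{x}{\lab{\ell}{x}})}$ versus $x$) differ. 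Your fact (c) only gives agreement among the copies created by a single substitution; it says nothing about agreement between substituted occurrences inside the redex and unsubstituted occurrences elsewhere, which is exactly what breaks here.

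The repair is an auxiliary invariant that your proof must carry alongside consistent labeling: every labeled body is closed (equivalently, labels only ever wrap closed expressions). This holds for all states reachable from source programs, because source programs carry no labels, the evaluation contexts $E$ never descend under a binder, so every redex of a closed program is closed; hence the argument wrapped with a fresh label at each $\betalr$ step is closed, the expressions wrapped by $\conslr$ are closed, and the expression $\inhole{E_2}{c}$ installed by the parallel update $\substlab{\cdot}{\ell}{\inhole{E_2}{c}}$ is closed since the old $\ell$-body was. With closed bodies, labeled subexpressions of the function body contain no free $x$ and really are untouched by substitution, your column-two argument goes through, and the same worry in your column-three cross-family comparison (where $c$ itself may be the result of a substitution) also evaporates. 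So: either restate the lemma for reachable (or closed-bodied) configurations and prove the closedness invariant preserved in the same induction, or your case analysis, as written, proves a false statement at the $\betalr$ case.
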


The rewriting rules are deterministic because any expression $\eLR$ can be
uniquely partitioned into an evaluation context and a redex. Thus if $\eLR_1$
rewrites to a expression $\eLR_2$, then $\eLR_1$ rewrites to $\eLR_2$ in a
canonical manner.

We can then use $\onesteplr$ to define an evaluator:
$$
\evallr(e)
\begin{cases} 
  v,      & \textrm{if }e \multisteplr v \\
  \bot,   & \textrm{if, for all } e \multisteplr \eLR_1, \eLR_1 \onesteplr \eLR_2  \\
  \error, & \textrm{if } e \multisteplr \eLR_1, \eLR_1 \notin v, \\
          &  \not\exists \eLR_2 \textrm{ such that } \eLR_1 \onesteplr \eLR_2 \\
\end{cases}
$$
where $\multisteplr$ is the reflexive-transitive closure of $\onesteplr$.

\subsection{A New Call-by-Need Lambda Calculus}

As is, the $\onesteplr$ relation cannot describe the standard reduction
sequences of any calculus. Each rule in figure~\ref{fig:lr} replaces the {\em
  entire\/} program with another, in a non-compositional manner.  Put
differently, the table of relations does not show how a standard reduction
semantics is created from a basic notion of reduction~\cite{b:lambda}, like
$\beta$, made compatible with the syntactic constructions of a
leftmost-outermost context~\cite{Felleisen2009Semantics,gdp:cbn-cbv}.

In this section, we sketch how our rewriting semantics relates to a plain, yet
novel call-by-need calculus. Put concisely, the calculus replaces the {\it
  deref\/} notion of reduction in \citepos{Ariola1997CBNeedJFP} call-by-need
calculus\footnote{The calculus of \ncite{Maraist1998CBNeedJFP} is unrelated in
  this case.} with a rule that exploits the function parameter for sharing but
implements evaluation via substitution.  The calculus comes with a standard
reduction theorem, and it is possible to show that our rewriting semantics
essentially relates the program to its answer via the same steps as the
standard reduction sequences of the calculus. Due to a lack of space, we merely
spell out the basic ideas without stating any theorems or proofs.

For simplicity, we take the syntax of the $\lambda$-calculus as the
 starting point: 
$$\eAF = x \mid \lam{x}{\eAF} \mid \app{\eAF}{\eAF}$$
 Evaluation of a \lambdaAF program terminates when it is reduced to an answer
 $\AAF$:
\begin{align*}
\VAF &= \lam{x}{\eAF} \\
\AAF &= \VAF \mid \app{\lamp{x}{\AAF}}{\eAF}
\end{align*}
Programs reduce to answers instead of values because reduction does not remove
 application terms. The specification of a notion of reduction relies on
 the notion of an evaluation context $\EAF$:
$$
 \EAF = \holeE \mid \app{\EAF}{\eAF} 
               \mid \app{\lamp{x}{\inhole{\EAF}{x}}}{\EAF} 
               \mid \app{\lamp{x}{\EAF}}{\eAF}
$$
 The evaluation contexts, especially the third one, specify that arguments
 are not evaluated until they are needed by some variable in the function
 body. 

 Here are the three notions of reductions (axioms) from Ariola and Felleisen's
 call-by-need calculus:
%
\begin{align*}
\tag{\textit{deref}} \\[-3pt]
 \app{\lamp{x}{\inhole{\EAF}{x}}}{\VAF} 
 &=
 \app{\lamp{x}{\inhole{\EAF}{\VAF}}}{\VAF} \\
\tag{\textit{lift}} \\[-3pt]
 \app{\app{\lamp{x}{\AAF}}{\eAF}}{{\eAF}'}
 &=
 \app{\lamp{x}{\app{\AAF}{{\eAF}'}}}{\eAF} \\
\tag{\textit{assoc}} \\[-3pt]
  \app{\lamp{x}{\inhole{\EAF}{x}}}{\app{\lamp{y}{\AAF}}{\eAF}}
   &= 
   \app{\lamp{y}{\app{\lamp{x}{\inhole{\EAF}{x}}}{\AAF}}}{\eAF}
\end{align*}
 The {\it deref\/} axiom substitutes the evaluated argument for the variable in
 the function body. The other two axioms deal with answers that may appear
 on the left-hand or right-hand side of an application.

One problem is that the {\it deref\/} axiom leaves the application
 alone, even after the argument has been reduced to a value. Clearly, doing
 so contradicts both the natural implementations (which use a mix of graph
 rewriting and memoizing) and the widely used Launchbury semantics (which
 uses a store-based semantics to mimic memoizing). To get closer to this
 semantics, we propose to replace the \textit{deref} axiom with the
 following $\betaneed$ axiom:
\begin{align*}
  \tag{$\betaneed$}
     \app{\lamp{x}{\inhole{\EAF}{x}}}{\VAF} 
  &= \subst{\inhole{\EAF}{x}}{x}{\VAF}
\end{align*}
 This axiom says that when a parameter occurs in a ``demand''
 position---the hole of an evaluation context---and the argument is a
 value, then plain old substitution captures the essence of parameter
 passing in a call-by-need language. 

The new axiom is indeed a proper notion of reduction and is applicable in any
context. Like Ariola and Felleisen's calculus, our revised lazy
$\lambda$-calculus is confluent (satisfies the Church-Rosser property) and
comes with a Curry-Feys style standard reduction theorem. Furthermore, there is
a simple bisimulation that relates standard reduction sequences to the
semantics of figure~\ref{fig:lr}.

%
%
%
%
%
%
%
%
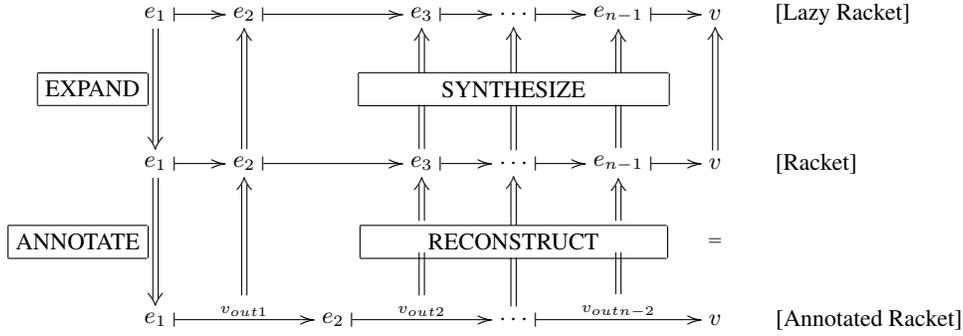
\begin{figure*}[htb]
  $$\xymatrixcolsep{5mm}\xymatrixrowsep{15mm}\xymatrix{
    e_1      \ar@{|->}[r] 
             \ar@{=>}[d]_*+[F]{\textrm{EXPAND}} 
    & e_2    \ar@{|->}[rr]
    &
    & e_3    \ar@{|->}[r]
    & \cdots \ar@{|->}[r] 
    & e_{n-1} \ar@{|->}[r]
    & v
    & \textrm{[Lazy Racket]}\hspace{7mm} \\
    e_1      \ar@{|->}[r] 
             \ar@{=>}[d]_*+[F]{\textrm{ANNOTATE}} 
    & e_2    \ar@{|->}[rr] 
             \ar@{=>}[u]
    &
    & e_3    \ar@{|->}[r] 
             \ar@{=>}[u]|-*+<12pt>{}
    & \cdots \ar@{|->}[r] 
             \ar@{=>}[u]|-*+[F]{\hspace{10mm}\textrm{SYNTHESIZE}\hspace{10mm}}
    & e_{n-1} \ar@{|->}[r] 
             \ar@{=>}[u]|-*+<12pt>{}
    & v      \ar@{=>}[u] 
    & \textrm{[Racket]}\hspace{14mm} \\
    e_1      \ar@{|->}[rr]
    &  \stackrel{v_{out1}}{\stackrel{}{{}^{}}} 
             \ar@{=>}[u]|-<{}
    & e_2    \ar@{|->}[rr]
    &  \stackrel{v_{out2}}{\stackrel{}{{}^{}}} 
             \ar@{=>}[u]|-*+<12pt>{}
    & \cdots \ar@{|->}[rr]
             \ar@{=>}[u]|-*+[F]{\hspace{8mm}\textrm{RECONSTRUCT}\hspace{8mm}}
    &  \stackrel{v_{outn-2}}{\stackrel{}{{}^{}}} 
             \ar@{=>}[u]|-*+<12pt>{}
    & v      \ar@{}[u]|-{=}
    & \textrm{[Annotated Racket]}
  }$$
  \caption{Stepper Implementation Architecture}
  \label{fig:stepperarch}
\end{figure*}

\section{Lazy Stepper Implementation}
\label{sec:stepperimpl}

Figure~\ref{fig:stepperarch} summarizes the software architecture of our
stepper. The first row depicts a \lambdaLR Lazy Racket rewriting sequence. To
construct this rewriting sequence, the lazy stepper first macro-expands a Lazy
Racket program into a functional Racket program, enriched with \delayname and
\forcename, as mentioned in section~\ref{sec:intro}. In turn, the stepper for
(eager) Racket annotates the expanded program. Executing an annotated Racket
program emits a series of output values, from which the reduction sequence for
the unannotated Racket program is reconstructed. Once the lazy stepper has the
plain Racket reduction sequence, it synthesizes each step to assemble the
desired Lazy Racket rewriting sequence.

The correctness of the lazy stepper thus depends on two claims:

\begin{enumerate}
\item The reduction sequence of a plain Racket program can be reconstructed
  from the output produced when evaluating an annotated version of that
  program.
\item The rewriting sequence of a Lazy Racket program is equivalent to the
  reduction sequence of the corresponding plain Racket program, modulo
  macro-expansion and synthesis steps.
\end{enumerate}
The first point corresponds to the work of \ncite{cff:stepper} and is depicted
by the bottom half of figure~\ref{fig:stepperarch}. The second point is
depicted by the top half of figure~\ref{fig:stepperarch}. The rest of the
section formally presents the architecture in enough detail so that our stepper
can be implemented for other programming languages, and so that we can prove
its correctness. The actual correctness theorem and proof can be found in the
next section.
%
%
%
\begin{figure*}[htb]
  \begin{center}
  \begin{tabular}{cccl}
    & $\csstep$ & & \\
    \hline \\
    %
    $\csdefE{ \app{\lamp{x}{\cDF}}{\vDF} }$ & $\csstep$ &
    $\csdefE{ \subst{\cDF}{x}{\vDF} }$ &
    $\beta_v$ \\
    %
    $\csdefE{ \prim{p^2}{\vDF_1}{\vDF_2} }$ & $\csstep$ &
    $\csdefE{ \delt{\prim{p^2}{\vDF_1}{\vDF_2}} }$ &
    \primcs \\
    %
    $\csdefE{ \car{\cons{\vDF_1}{\vDF_2}} }$ & $\csstep$ &
    $\csdefE{ \vDF_1 }$ &
    \carcs \\
    %
    $\csdefE{ \cdr{\cons{\vDF_1}{\vDF_2}} }$ & $\csstep$ &
    $\csdefE{ \vDF_2 }$ &
    \cdrcs \\
    %
    $\csdefE{ \iif{\true}{\cDF_1}{\cDF_2} }$ & $\csstep$ &
    $\csdefE{ \cDF_1 }$ &
    \iftruecs \\
    %
    $\csdefE{ \iif{\false}{\cDF_1}{\cDF_2} }$ & $\csstep$ &
    $\csdefE{ \cDF_2 }$ &
    \iffalsecs \\
    %
    $\csdefE{ \delay{\cDF} }$ & $\csstep$ &
    $\cs{ \inhole{\EDF}{\delaylocdef} }{\storeadddef{\cDF}}$ &
    \delaycs \\
    & & $\ell \notin \dom{\store}$ & \\
    %
    $\csdefE{ \force{\delaylocdef} }$ & $\csstep$ &
    $\csdefE{ \force{\forceEDFdef{\storelookupdef}} }$ &
    \forcedelaycs \\
    %
    $\csdefE{ \forceEDFdef{\vDF} }$ & $\csstep$ &
    $\cs{ \inholedef{\vDF} }{\storeupdatedef{\vDF}}$ &
    \forceupdatecs \\
    %
    $\csdefE{ \force{\vDF} }$ & $\csstep$ &
    $\csdefE{ \vDF }$ &
    \forcenondelaycs \\
    $\vDF \notin \textrm{Locations}$ & &\\
  \end{tabular}
  \end{center}
  \caption{CS Machine Transitions}
  \label{fig:cstransitions}
\end{figure*}

\subsection{Racket + \delayname/\forcename}

When the stepper is invoked on a Lazy Racket program, the source is first
macro-expanded to a plain Racket program. Racket programs are eagerly
evaluated, so lazy evaluation is simulated with the insertion of \delayname and
\forcename constructs. We model this expanded language with \lambdaDF, a core
calculus of functional Racket with \delayname and \forcename:
\begin{align*}
\eDF = 
\; n &\mid s \mid b \mid x \mid \lam{x}{\eDF} \mid \app{\eDF}{\eDF} 
      \mid \prim{p^2}{\eDF}{\eDF} \\
     &\mid \cons{\eDF}{\eDF} \mid \nul \mid \app{p^1}{\eDF} \\
     & \mid \iif{\eDF}{\eDF}{\eDF} 
     \mid \delay{\eDF} \mid \force{\eDF} \\
n \in&\; \mathbb{Z},\; s \in \textrm{Strings},\;b = \true \mid \false \\
p^2 = \;&+ \mid - \mid * \mid / \hspace{0.5cm} p^1 = \carname \mid \cdrname
\end{align*}
The syntax of \lambdaDF is similar to \lambdaLR except that \delayname and
\forcename replace labeled expressions. The \delayname construct suspends
evaluation of its argument in a thunk; applying \forcename to a thunk evaluates
the suspended computation and memoizes the result. In addition, applying
\forcename to a suspended computation wrapped in multiple, nested thunks forces
all the thunks, while applying \forcename to a value returns that value.


The semantics of \lambdaDF combines the usual call-by-value world with store
effects. We describe it with a high-level abstract machine, specifically, a CS
machine~\cite{Felleisen2009Semantics}. The C in the CS machine stands for
control string and the S is a store that represents physical memory. In our
machine the control string is an expression that may contain locations, i.e.,
references to delayed expressions in the store. In contrast to the standard CS
machine, the store in our machine only holds delayed computations.

Here is the specification of our CS machine:
\begin{align*}
\tag{Machine States}
\confDF &= \cs{\CDF}{\store} \\
\tag{Transition Sequences}
\seqDF &= \seq{\confDF_1}{\confDF_n} \\
\tag{Control Strings}
\CDF &= \inhole{\EDF}{\cDF} \\
\tag{Machine Expressions}
\cDF    &= \eDF \mid \locdef \\
\tag{Stores}
\store  &= \ls{\pair{\ell}{\cDF}} \\
\ell    &\in \textrm{Locations} \\
\tag{Evaluation Contexts}
\EDF &=\holeE \mid \app{\EDF}{\cDF} \mid \app{\vDF}{\cDF} \\
     &\;\;\;\;\;\;\;\:\mid \prim{p^2}{\EDF}{\cDF} \mid \prim{p^2}{\vDF}{\EDF} \\
     &\;\;\;\;\;\;\;\:\mid \cons{\EDF}{\cDF} \mid \cons{\vDF}{\EDF} 
                      \mid \app{p^1}{\EDF} \\
     &\;\;\;\;\;\;\;\:\mid \iif{\EDF}{\cDF}{\cDF} \\
     &\;\;\;\;\;\;\;\:\mid \force{\EDF} \mid \forceEDFdef{\EDF} \\
\tag{Values}
\vDF &= n \mid s \mid b \mid \lam{x}{\cDF} 
          \mid \cons{\vDF}{\vDF} \mid \nul \mid \delaylocdef
\end{align*}

The store in a machine configuration is represented as a list of pairs. In the
above specification, ellipses means ``zero or more of the preceding
element''. The evaluation contexts are the standard call-by-value contexts,
plus two \forcename contexts. The first \forcename context resembles the
\forcename expression in a program and indicates that some arbitrary expression
is being forced. For the evaluation of a specific delayed computation, the
second \forcename context is used. It remembers a location so the store can be
updated after the evaluation is complete. Evaluation under a
$\forceEDFdef{\holeE}$ context corresponds to evaluation under a label in
\lambdaLR.

The starting machine configuration for a Racket program $\eDF$ is
$\csstart{\eDF}$ where the program, in an empty context, is set as the initial
control string, and the store is initially empty. Evaluation stops when the
control string is a value. Values are numbers, strings, booleans, abstractions,
lists, or store locations. Our CS machine transitions are in
figure~\ref{fig:cstransitions}. Every program $\eDF$ has a deterministic
transition sequence because the left hand sides of all the transition rules are
mutually exclusive and cover all possible control strings in the C register.

The $\beta_v$, \primcs, \carcs, \cdrcs, \iftruecs, and \iffalsecs transitions
are standard call-by-value transitions. The \delaycs transition reduces a
\delayname expression to an unused location $\locdef$. The delayed computation
is saved at that location in the store. When the argument to a \forcename
expression is a location, the suspended expression at that location is
retrieved from the store and plugged into a special \forcename evaluation
context, as specified by the \forcedelaycs transition. The outer \forcename
evaluation context is retained in case there are nested \delayname{}s. The
special context saves the store location of the forced expression, so the store
can be updated with the resulting value, as dictated by the \forceupdatecs
transition. Finally, the \forcenondelaycs transition specifies that forcing a
non-location value results in the removal of the outer \forcename context.
%
%
%
\begin{figure*}[htb]
  \begin{center}
  \begin{tabular}{cccl}
    & $\cskmstep$ & &\\
    \hline \\
    %
    $\cskmexpLHSdef{\wcm{\cDFCM_1}{\cDFCM_2}}$ &
    $\cskmstepdef$ &
    $\cskmexpRHSdef{\cDFCM_1}{\wcmKDFCMdefone{\cDFCM_2}}$ &
    \wcmexpcskm \\
    %
    $\cskmvalLHSdef{\vDFCM}{\wcmKDFCMdef}$ &
    $\cskmstepdef$ &
    $\cskm{\cDFCM}{\store}{\KDFCM}{\vDFCM}$ &
    \wcmvalcskm \\
    %
    $\cskmexpLHSdef{\ccm}$ &
    $\cskmstepdef$ &
    $\cskmexpRHSdef{\vDFCM}{\KDFCM}$ &
    \ccmcskm \\
    & & \where{\vDFCM}{\pi(\KDFCM,\contmark)} & \\
    %
    $\cskmexpLHSdef{\out{\cDFCM}}$ &
    $\cskmstepdef$ &
    $\cskmexpRHSdef{\cDFCM}{\outKDFCMdef}$ &
    \outexpcskm \\
    %
    $\cskmvalLHSdef{\vDFCM}{\outKDFCMdef}$ &
    $\cskmstepout{\vDFCM}$ &
    $\cskmvalRHSdef{42}$ &
    \outvalcskm \\
    %
    $\cskmexpLHSdef{\lochuh{\cDFCM}}$ &
    $\cskmstepdef$ &
    $\cskmexpRHSdef{\cDFCM}{\lochuhKDFCMdef}$ &
    \lochuhexpcskm \\
    %
    $\cskmvalLHSdef{\ell}{\lochuhKDFCMdef}$ &
    $\cskmstepdef$ &
    $\cskmvalRHSdef{\true}$ &
    \lochuhvaltruecskm \\
    %
    $\cskmvalLHSdef{\vDFCM}{\lochuhKDFCMdef}$ &
    $\cskmstepdef$ &
    $\cskmvalRHSdef{\false}$ &
    \lochuhvaltruecskm \\
    $\vDFCM \notin \textrm{Locations}$
  \end{tabular}
  \end{center}
  \caption{CSKM Machine Transitions}
  \label{fig:cskmtransitions}
\end{figure*}

\subsection{Continuation Marks}

A stepper for a functional language needs access to the control stack of its
evaluator in order to reconstruct the evaluation steps. In a low-level stepper
implementation, the stepper would be granted complete, privileged access to the
control stack. As \ncite{cff:stepper} argued, however, such privileged
access is unnecessary and often undesirable.

Continuation marks are a lightweight stack-access mechanism. The stepper for
Lazy Racket reuses Clements's stepper for Racket, which utilizes continuation
marks to reconstruct a program's control stack, i.e., the evaluation
context. There are two available operations for these novel values:
\begin{enumerate}
	\item \cmstore a value in the current frame of the control stack,
	\item \cmretrieve all stored continuation marks.
\end{enumerate}
Using these two operations it is possible to implement a stepper without
coupling it directly to the evaluator.

\ncite{cff:stepper} present such a stepper for the eager Racket evaluator. The
stepper first annotates a source program with continuation mark \cmstore and
\cmretrieve operations at appropriate points. Then, at each \cmretrieve point,
the stepper reconstructs and outputs a reduction step from information stored
in the continuation marks. Our stepper extends Clements's model with \delayname
and \forcename constructs. The annotation and reconstruction functions are
formally defined in section~\ref{sec:correctness}.

\subsection{Racket + \delayname/\forcename + Continuation Marks}

After a Lazy Racket program is expanded to a plain Racket program, the lazy
stepper annotates the plain Racket program with continuation mark
operations. The language \lambdaDFCM extends \lambdaDF in a stratified manner
and models the language for annotated programs.
\begin{align*}
\eDFCM &= \eDF \mid \ccm \mid \wcm{\eDFCM}{\eDFCM} \mid \out{\eDFCM} 
               \mid \lochuh{\eDFCM}
\end{align*}
\lambdaDFCM adds four additional kinds of expressions to \lambdaDF: \wcmname,
\ccmname, \outname, and a \lochuhname predicate. When a \wcmname, or ``with
continuation mark'', expression is evaluated, its first argument is evaluated
and stored in the current stack frame before its second argument is
evaluated. A \ccmname, or ``current continuation marks'', expression evaluates
to a list of all continuation marks currently stored in the stack. When
reducing an \outname expression, its argument is evaluated and sent to an
output channel. An \outname expression is evaluated only for this side effect,
so the result of its evaluation is thus inconsequential. The \lochuhname
predicate identifies locations and is needed by annotated programs.

\subsection{CSKM Machine}

To model continuation marks, having an explicit control stack is helpful, so we
convert our CS machine to a CSK machine, where the evaluation context is
separated and removed from the control string in the C register and relocated
to a new K register. The conversion to a CSK machine is straightfoward and is
accomplished using known techniques~\cite{Felleisen2009Semantics}. In addition,
we pair each context with a continuation mark $\contmark$, which is stored in a
fourth ``M'' register, giving us a CSKM machine.

For the control stack in the K register, we use an ``inverted'' evaluation
context structure, meaning that the innermost context is now most easily
accessible, giving us a more realistic stack structure. This new representation
is called a continuation and there is a one-to-one correspondence between
evaluation contexts and continuations. For example, an evaluation context
$\inhole{\EDFCM}{\app{\holeE}{\cDFCM}}$ becomes
$\apponeKDFCM{\cDFCM}{\KDFCM}{\contmark}$, where the $\EDFCM$ context
corresponds to the $\KDFCM$ continuation. The other evaluation contexts are
similarly converted. Note the extra continuation mark $\contmark$ associated
with the continuation. Here are all the continuations $\KDFCM$:
\begin{align*}
\KDFCM = \\
  \mtKDFCM &\mid \apponeKDFCMdef \mid \apptwoKDFCMdef \\
  &\mid \primtwooneKDFCMdef \mid \primtwotwoKDFCMdef \\
  &\mid \consoneKDFCMdef \mid \constwoKDFCMdef  \\
  &\mid \primoneKDFCMdef \mid \ifKDFCMdef \\
  &\mid \forceKDFCMdef   \mid \forcetwoKDFCMdef \\
  &\mid \wcmKDFCMdef     \mid \outKDFCMdef \mid \lochuhKDFCMdef
\end{align*}

The configurations of the CSKM machine are:
\begin{align*}
\tag{Machine States}
\confDFCM &= \cskm{\cDFCM}{\store}{\KDFCM}{\contmark} \\
\tag{Transition Sequences}
\seqDFCM &= \seq{\confDFCM_1}{\confDFCM_n} \\
\tag{Control Strings}
\cDFCM &= \eDFCM \mid \locdef \\
\tag{Values}
\vDFCM &= \vDF \\
\tag{Stores}
\store      &= \ls{\pair{\ell}{\cDFCM}} \\
\tag{Mark Register}
\contmark   &= \nomark \mid \vDFCM
\end{align*}

Control strings are again extended to include location expressions, values are
the same as CS machine values, and stores map locations to control string
expressions. The mark register $\contmark$ is either empty or contains a
value. For simplicity, we assume that only one mark can be associated with a
continuation frame.

The transitions for our CSKM machine are in
figure~\ref{fig:cskmtransitions}. In order to formally model output, we add an
extra tag to each transition, so our machine operates as a labeled transition
system~\cite{Keller1976FormalVerification}. When the machine emits output, the
transition is tagged with the outputted value; otherwise, the transition tag is
$\emptyset$. If a transition has no output tag, it means the output is
inconsequential in the current context. In our machine, only an \outname
expression emits output. For space reasons, we only include the transitions for
the new constructs: \wcmname, \ccmname, \outname, and \lochuhname. The other
transitions are easily derived from the transitions for the CS
machine~\cite{Felleisen2009Semantics}.

The starting configuration for a program $\eDFCM$ is $\cskmstart{\eDFCM}$ and
evaluation halts when the control string is a value and the control stack is
$\mtKDFCM$. Every program $\eDFCM$ has a deterministic transition sequence
because the left hand sides of all the transition rules are mutually exclusive
and cover all possible C and K register combinations.

The \wcmexpcskm transition sets the first argument as the control string and
saves the second argument in a \wcmname continuation. In the resulting machine
configuration, the $\contmark$ register is reinitialized to $\nomark$ because a
new frame is pushed onto the stack. When evaluation of the first \wcmname
argument is complete, the resulting value is set as the new continuation mark,
as dictated by the \wcmvalcskm transition. This new continuation mark
overwrites any previous mark.

The \ccmcskm transition uses the $\pi$ function to retrieve all continuation
marks in the stack. The $\pi$ function is defined as follows:
%
  \begin{align*}
    \pifn{\mtKDFCM}{\contmark} &= \cons{\contmark}{\nul} \\
    \pifndef{\apponeKDFCMdef}{\contmark'}  \\
    \pifndef{\apptwoKDFCMdef}{\contmark'}  \\
    \ldots \\
    \pifndef{\lochuhKDFCMdef}{\contmark'}
  \end{align*}
%
Only the first few cases are shown. The rest of the definition, for other
continuations, is similarly defined. The \outexpcskm transition sets the
argument in an \outname expression as the control string and pushes a new
\outname continuation frame onto the control stack. Again, the continuation
mark register is initialized to $\nomark$ due to the new stack frame. When the
argument is evaluated, the resulting value is emitted as output, as modeled by
the label on the \outvalcskm transition. Finally, the \lochuhexpcskm,
\lochuhvaltruecskm, and \lochuhvalfalsecskm transitions are defined in the
expected manner, producing $\true$ if the \lochuhname predicate is applied to a
location, and $\false$ otherwise.

\section{Correctness}
\label{sec:correctness}

Unlike most IDE tools, an algebraic stepper comes with a concise specification:
the \lambdaLR rewriting system. Specifically, the stepper displays \lambdaLR
rewriting sequences after removing all labels from the terms. It is therefore
relatively straightforward to state a correctness theorem for the stepper,
assuming a function $\unlabelname$ that strips an \lambdaLR program of its
labels.

\begin{theorem}[Stepper Correctness]          \label{theorem:correctness}
 If the stepper displays the sequence $e, \unlabel{\eLR_1}, \ldots,
 \unlabel{\eLR_n}$ for some Lazy Racket program $e$, then $e \multisteplr
 \eLR_1 \multisteplr \cdots \multisteplr \eLR_n$.
\end{theorem}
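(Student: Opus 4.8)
The plan is to follow the three-layer architecture of figure~\ref{fig:stepperarch} and to factor the proof into the two claims stated just above it, each relating adjacent layers; the theorem then follows by composing the two relations. Concretely, the stepper produces its displayed output by running the annotation of $\text{EXPAND}(e)$ on the CSKM machine, collecting the emitted output values, applying RECONSTRUCT to recover the CS-machine (\lambdaDF) reduction sequence of $\text{EXPAND}(e)$, and finally applying SYNTHESIZE to lift each CS-machine state back to an \lambdaLR term. I would prove that each of these maps is faithful and that their composition sends the CS-machine run to a genuine \lambdaLR rewriting sequence whose observable terms are exactly the displayed ones.

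First I would establish Claim~2 (the top half of figure~\ref{fig:stepperarch}) as a bisimulation between $\onesteplr$ and $\csstep$. The central idea is a correspondence identifying each \lambdaLR label $\ell$ with a CS-machine store location $\ell$: a shared labeled subexpression $\lab{\ell}{\eLR}$ plays the role of a reference to $\ell$, and the parallel update $\substlab{\cdot}{\ell}{\cdot}$ of figure~\ref{fig:update} corresponds to the store write of the \forceupdatecs transition. I would define SYNTHESIZE to read back an \lambdaLR term from a configuration $\cs{\CDF}{\store}$ by inlining, at every occurrence of a location $\ell$, the labeled expression $\lab{\ell}{\eLR}$ obtained from the store (and, when $\ell$ is the location currently under a $\forceEDFdef{\holeE}$ context, from the partially reduced expression in the control string). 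The bisimulation is then proved by case analysis over the rules: $\beta_v$ together with the administrative \delaycs allocation of the argument location matches $\betalr$; the strict rules \primlr, \carlr, \cdrlr, \iftruelr, and \iffalselr match their CS counterparts; and a reduction \emph{under} a label in \lambdaLR (column three of figure~\ref{fig:lr}) matches a single reduction of the forced expression inside a $\forceEDFdef{\holeE}$ context, with the \lambdaLR parallel update mirrored by SYNTHESIZE re-inlining that expression at all reference sites. Lemma~\ref{lem:samelab} guarantees consistent labeling is preserved, which is what makes the label-to-location identification well defined throughout.

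Next I would establish Claim~1 (the bottom half of figure~\ref{fig:stepperarch}) by adapting \citepos{cff:stepper} reconstruction argument to the CSKM machine. The content here is that annotating a \lambdaDF program with \wcmname, \ccmname, \outname, and \lochuhname operations does not disturb the underlying $\csstep$ reductions, while the marks accumulated in the K and M registers, together with the emitted output values, carry exactly enough information for RECONSTRUCT to rebuild each CS-machine state. The new ingredient beyond Clements's call-by-value development is the treatment of locations and the two \forcename contexts; I would check that the $\pi$ read-back of marks and RECONSTRUCT handle $\force{\holeE}$, $\forceEDFdef{\holeE}$, and locations correctly, so that the reconstructed sequence coincides with the CS-machine run of the unannotated program.

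Finally I would compose the two results: given that the stepper displays $e, \unlabel{\eLR_1}, \ldots, \unlabel{\eLR_n}$, Claim~1 gives that the reconstructed CS-machine sequence is the actual $\csstep$ run of $\text{EXPAND}(e)$, and Claim~2 gives that SYNTHESIZE maps that run to an \lambdaLR rewriting sequence starting at $e$ and passing through $\eLR_1, \ldots, \eLR_n$; since the displayed terms form a (possibly proper) subsequence of the full rewriting sequence, consecutive displayed terms are related by $\multisteplr$, yielding $e \multisteplr \eLR_1 \multisteplr \cdots \multisteplr \eLR_n$. The main obstacle is the bisimulation of the first step, and within it the reconciliation of two accounts of sharing: \lambdaLR performs \emph{parallel} reduction that rewrites every same-labeled copy at once and at every intermediate step, whereas the CS machine reduces the shared expression a single time in a force context and writes the result to the store only when a value is reached. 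Defining SYNTHESIZE so that at each intermediate state it re-inlines the currently-forced, partially reduced expression at all reference sites, and proving that this read-back commutes with both step relations while respecting the freshness side conditions on labels and locations, is the delicate heart of the argument.
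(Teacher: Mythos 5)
Your proposal is correct and takes essentially the same route as the paper: the same two-lemma decomposition (a Clements-style annotation/reconstruction correctness result for the CSKM machine, and an LR correctness lemma relating $\csstep$ runs of $\macro{e}$ to $\onesteplr$ via the synthesis function $\unmacroname$), including the paper's key device of having $\unmacroname$ first write the partially reduced expression under a $\forceEDFdef{\holeE}$ context back into the store so that every occurrence of that location reflects the intermediate reduction, with administrative transitions (\delaycs, \forcedelaycs, \forceupdatecs, \forcenondelaycs) mapping to zero \lambdaLR steps and label-inserting steps absorbed by $\unlabelname$ and $\multisteplr$. The only cosmetic difference is that you frame the second lemma as a bisimulation, whereas the paper's lemma~\ref{lemma:transition} proves just the one-directional simulation (by induction on the number of CS-machine steps) that the theorem actually requires.
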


The statement of the theorem's conclusion involves multistep rewriting
 because some rewriting steps, such as \conslr, merely add labels and
 change nothing else about the term. 

The proof of the theorem consists of two distinct steps. First, we show that
the output of a macro-expanded, annotated Lazy Racket program uniquely
describes the execution of a macro-expanded Lazy Racket program without
annotations. That is, we retrieve a machine reduction sequence at the level of
Racket with \delayname and \forcename. Second, we prove that this reduction
sequence is equivalent to the rewriting sequence of the original Lazy Racket
program, modulo label assignment. The following two subsections spell out the
two lemmas and present proof sketches. The proof of
theorem~\ref{theorem:correctness} combines the two main lemmas from these
subsections in a straightforward fashion.

\subsection{Annotation and Reconstruction Correctness} \label{subsec:annorecon}

 To state the correctness lemma for the CSKM machine, we need two
 functions. First $\tracename : \seqDFCM \rightarrow
 \vDFCM_1,\ldots,\vDFCM_n$ consumes a series of CSKM steps and produces the
 trace of output values: 

\begin{align*}
\trace{\cdots \cskmstep \confDFCM_i \cskmstepout{\vDFCM} \confDF_{i+1} \cskmstep \cdots} \\ 
= \ldots, \vDFCM, \ldots
\end{align*}
 Second, $\annoname : \eDF \rightarrow \eDFCM$ annotates a plain Racket 
 program and $\reconname : \vDFCM_1,\ldots,\vDFCM_n \rightarrow \seqDF$
 reconstructs a CS machine transition sequence for a Racket program. 

\begin{lemma}[Annotation/Reconstruction Correctness] \label{lemma:annorecon}
 For any Racket program $\eDF$, if $\csstart{\eDF} \csstep \cdots \csstep
 \confDF$, then 
 $$\begin{array}{rcl}
  \recon{\trace{
          \cskmstart{\anno{\eDF}} \cskmstep \cdots \cskmstep \confDFCM}} &&\\
        \multicolumn{3}{c}{=  \csstart{\eDF} \csstep \cdots \csstep \confDF}
   \end{array}
 $$
\end{lemma}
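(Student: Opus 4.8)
The plan is to establish a tight, step-indexed correspondence between the CS machine running the unannotated program $\eDF$ and the CSKM machine running the annotated program $\anno{\eDF}$, and then to show that the output trace records exactly the information $\reconname$ needs to rebuild the CS sequence. The argument proceeds by induction on the length of the transition sequence $\csstart{\eDF} \csstep \cdots \csstep \confDF$, following the strategy of \ncite{cff:stepper} but extended to cover the \delayname/\forcename fragment.

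First I would define a projection from CSKM configurations back to CS configurations: it folds the K register into an evaluation context $\EDF$, drops the mark register $\contmark$, and strips the \wcmname, \ccmname, \outname, and \lochuhname operations introduced by $\annoname$. The core simulation lemma states that the administrative CSKM transitions---\wcmexpcskm, \wcmvalcskm, \ccmcskm, \outexpcskm, \outvalcskm, \lochuhexpcskm, and the two \lochuhname value transitions---leave this projection unchanged, whereas each genuine reduction transition ($\beta_v$, \primcs, \carcs, \cdrcs, \iftruecs, \iffalsecs, \delaycs, \forcedelaycs, \forceupdatecs, \forcenondelaycs) projects onto the matching single CS step. Consequently the projection of the CSKM run is precisely the CS run, and the two machines stay in lockstep modulo bookkeeping.

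Second I would characterize the trace. By construction $\annoname$ places a \ccmname/\outname pair immediately before each contraction, so the value emitted at the $i$-th \outvalcskm transition encodes the $i$-th CS configuration. The crucial invariant, proved simultaneously with the simulation, is that the marks gathered by $\pi$ at any \ccmname point mirror the current context $\EDF$: each continuation frame carries a mark recording its enclosing context layer, so $\pi$ reassembles exactly the context surrounding the redex. Threading this through the induction shows that $\tracename$ applied to the CSKM run yields a sequence $\vDFCM_1, \ldots, \vDFCM_n$ in which $\vDFCM_i$ records the redex, its context, and any pending store action of the $i$-th step.

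Finally I would show that $\reconname$ inverts this encoding, reading the context and redex off each $\vDFCM_i$, rebuilding the corresponding CS configuration, and threading the store through to obtain $\csstart{\eDF} \csstep \cdots \csstep \confDF$. The main obstacle is the invariant linking marks to the evaluation context in the presence of the store: forcing a delayed computation (\forcedelaycs) pushes a second \forcename continuation that remembers a location and must later fire a store update (\forceupdatecs), and this frame corresponds to a label in \lambdaLR. I must check that $\pi$ and $\reconname$ treat these store-linked frames consistently, so that reconstruction neither drops nor duplicates a \forcename context and correctly records which value overwrites each live location. Establishing this mark–context–store agreement is the delicate part; once it holds, the remaining cases of the induction are routine.
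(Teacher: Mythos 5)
Your plan is essentially the paper's own proof: the published argument is precisely an extension of Clements's simulation-based proof (Section~3.4 of his thesis) with new cases for \delayname{} and \forcename{} and with the store ``simply threaded through,'' which is exactly the projection-plus-mark-invariant structure you describe. One small caution as you flesh it out: the administrative CSKM steps include not only the \wcmname{}/\ccmname{}/\outname{}/\lochuhname{} transitions you list but also the $\beta_v$ steps for the \texttt{let*} bindings (and the \allocname{} call) that $\annoname$ inserts, so your projection must erase these as well---a point the paper inherits silently from Clements's machinery.
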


Our annotation and reconstruction functions extend the functions of
 \ncite{cff:stepper}. Figures~\ref{fig:anno}~and~\ref{fig:recon} summarize
 these additions. We omit the parts defined by Clements et al.\ and instead
 review the functions with some illustrative examples.

\begin{figure}[t]
$\annoname : \eDF \rightarrow \eDFCM$
\begin{alltt}
\(\anno{\force{\eDF}} =\)
  \((\)let*
    \((\)[\(v0\) \(\wcm{\lst{\texttt{"force"}}}{\anno{\eDF}}\)]
     [\(v1\)
      \((\)if \((not (\)loc? \(v0))\)
          \(v0\)
          \((\)wcm 
            \(\boxed{\lst{\texttt{"force"}}}\sb{1}\)
            \((\)wcm 
              \(\boxed{\lst{\texttt{"force"} v0}}\sb{2}\)
              \((\)let* 
                \((\)[\(v2\) \(\force{v0}\)] ; v0 is location
                 [\(t0\) 
                  \((\)\(\boxed{\texttt{output}}\sb{3}\)
                    \((\consname \lst{\texttt{"val"} \boxed{v0}\sb{4} \quo{v2}}\)
                         \(\boxed{\cdr{\ccm}}\sb{5}))\)]\()\)
                \(v2)))\))] 
     [\(t1\) \(\out{\cons{\quo{v1}}{\ccm}}\)]\()\)
    \(v1)\)

\(\anno{\delay{\eDF}}\) =
  \((\)let*
    \((\)[\(t0\) \(\out{\cons{\quo{\delay{\eDF}}}{\ccm}}\)]
     [\(\ell\) \(\alloc\)]
     [\(t1\) \((\)output
           \(\cons{\lst{\texttt{"loc"} \ell \boxed{\quo{\eDF}}\sb{6}}}{\ccm})\)]\()\)
    \(\delay{\anno{\eDF}})\)
\end{alltt}
  \caption{Annotation function for \delayname and \forcename.}
  \label{fig:anno}
\end{figure}

\begin{figure}[htb]
\begin{align*}
%
%
\reconname &: \vDFCM_1,\ldots,\vDFCM_n \rightarrow \seqDF \\
\recon{\ldots,\vDFCM_i,\ldots} = 
  \ldots,
  \langle & \inhole{\reconE{\cdr{\vDFCM_i}}}{\reconC{\car{\vDFCM_i}}},\\
          & \reconS{\vDFCM_1,\ldots,\vDFCM_i} \rangle,\\ 
  \ldots &
\end{align*}
%
%
\begin{align*}
\reconEname : \vDFCM &\rightarrow \EDF \\
\reconE{\cons{\lst{\forcestr}}{\vDFCM}} &= \force{\reconE{\vDFCM}} \\
\reconE{\cons{\lst{\forcestr\;\ell}}{\vDFCM}} &= \forceEDFdef{\reconE{\vDFCM}}
\end{align*}
%
%
\begin{align*}
\reconCname : \vDFCM &\rightarrow \cDF \\
\reconC{\lst{\textrm{``val''} \; \ell \; \vDF}} &= \unquo{\vDF} \\
\reconC{\lst{\textrm{``loc''} \; \ell \; \vDF}} &= \ell \\
\textrm{otherwise, } \reconC{\vDF} &= \unquo{\vDF} 
\end{align*}
%
%
$$\reconSname : \vDFCM_1,\ldots,\vDFCM_n \rightarrow \store$$
$$\reconS{\cons{\lst{s \; \ell \; {\vDF}'}}{\vDF},\vDF_{rest},\ldots} = $$
$$
\hspace{0.7cm} \left\{ \begin{array}{r}
 \cons{\pair{\ell}{\unquo{{\vDF}'}}}{\reconS{\vDF_{rest},\ldots}} \\
  \mbox{ if $\ell \notin \dom{\reconS{\vDF_{rest},\ldots}}$} \\
  \reconS{\vDF_{rest},\ldots} \hspace{3cm} \\
  \mbox{ if $\ell \in \dom{\reconS{\vDF_{rest},\ldots}}$} \\
 \end{array} \right.
$$
\begin{align*}
&\reconS{\vDF,\vDF_{rest},\ldots} = \reconS{\vDF_{rest},\ldots} \\
&\car{\vDF} \neq \lst{s \; \ell \; {\vDF}'}
\end{align*}
  \caption{Reconstruction function for \delayname and \forcename.}
  \label{fig:recon}
\end{figure}

Annotation adds \outname expressions and continuation mark \wcmname and
\ccmname operations to a program. For example, annotating the program
$\prim{+}{1}{2}$ results in the following annotated program:\footnote{For
  clarity, the syntactic sugar \texttt{let*} and \texttt{list} forms are
  used. They are defined in the standard way. Other minor code-readability
  improvements have also been made.}
\begin{alltt}

    \((\)let* 
      \((\)[\(t0\) \(\out{\cons{\quo{\prim{+}{1}{2}}}{\ccm}}\)]
       [\(v1\) \(\prim{+}{1}{2}\)]
       [\(t1\) \(\out{\cons{\quo{v1}}{\ccm}}\)]\()\)
     \(v1)\)
\end{alltt}
Annotated programs utilize the quoting function $\quoname$, which converts an
expression into a value representation. For example $\quo{\prim{+}{1}{2}} =
\lst{\texttt{"+"} \; 1 \; 2}$. There is also an inverse function, $\unquoname$,
for reconstruction. The above annotated program evaluates to 3, outputting the
values $\quo{\prim{+}{1}{2}}$ and $\quo{3}$ in the process, from which the
reduction sequence $\prim{+}{1}{2} \rightarrow 3$ can be recovered. The $\ccm$
calls in the example return the empty list since no continuation marks were
previously stored, i.e., there were no calls to \wcmname. There is no need for
\wcmname annotations because the entire program is a redex, i.e., the context
is empty.

Extending the example to $\prim{+}{\prim{+}{1}{2}}{5}$ yields
\begin{alltt}
  \((\)let*
    \((\)[\(v0\) (wcm \(\lst{\texttt{"prim2-1"} \texttt{"+"} 5}\)
              \((\)let* 
                \((\)[\(t0\) \(\out{\cons{\quo{\prim{+}{1}{2}}}{\ccm}}\)]
                 [\(v1\) \(\prim{+}{1}{2}\)]
                 [\(t1\) \(\out{\cons{\quo{v1}}{\ccm}}\)]\()\)
                \(v1))\)]
     [\(v2\) \(\prim{+}{v0}{5}\)]
     [\(t2\) \(\out{\cons{\quo{v2}}{\ccm}}\)]\()\)
   \(v2)\)
\end{alltt}
This extended example contains the first program as a subexpression; and
therefore the annotated version of the program contains the annotated
version of the first example. The $\prim{+}{1}{2}$ expression now occurs in
the context $\prim{+}{\holeE}{5}$ and the \wcmname expression stores an appropriate
continuation mark so this context can be reconstructed. The
\texttt{"prim2-1"} label indicates that the hole is in the left argument
position. The first \outname expression now produces the output value
$\lst{\quo{\prim{+}{1}{2}} \; \lst{\texttt{"prim2-1"} \; \texttt{"+"} \;
5}}$, which can be reconstructed to the expression
$\prim{+}{\prim{+}{1}{2}}{5}$. Reconstructing all outputs produces
$\prim{+}{\prim{+}{1}{2}}{5} \rightarrow \prim{+}{3}{5} \rightarrow 8$.

Storing context information in continuation marks also enables the
reconstruction of a machine state, which is what the stepper actually does, and
the reconstruction and annotation functions defined in
figures~\ref{fig:anno}~and~\ref{fig:recon} demonstrate how this works.
Figure~\ref{fig:anno} shows that if the subexpression $\eDF$ of a {\tt force}
expression does not evaluate to location, the annotations are like those for
the above examples. If $\eDF$ produces a location, additional continuation
marks (figure~\ref{fig:anno}, boxes 1 and 2) are needed to indicate the
presence of \forcename contexts during evaluation of a delayed computation. An
additional \outname expression (box 3) is also needed so that the steps showing
the removal of both the $\force{\holeE}$ and $\forceEDFdef{\holeE}$ contexts
can be reconstructed. Note the $\cdr{\ccm}$ (box 5) in the first \outname; this
ensures the $\forceEDFdef{\holeE}$ context is not part of the reconstructed
control string. The location $v0$ (box 4) is included in the output so the
store can be properly reconstructed. The \texttt{"val"} tag directs the
reconstruction function to use the value $\quo{v2}$ from the emitted
location-value pair for reconstructing the control string.

The annotation of a \delayname expression requires predicting the location
 of the delayed compuation in the store. We therefore assume we have access
 to an \allocname function that uses the same location-allocating algorithm
 as the memory management system of the machine.\footnote{Since labels are
 not displayed as numbers but as sharing among expressions, this
 unrealistic mathematical assumption is acceptable.} In addition to the
 location, the delayed expression itself (box 6) is also included in the
 output, to enable reconstruction of the store. The \texttt{"loc"} tag
 directs the reconstruction function to use the location from the emitted
 location-value pair for reconstructing the control string.

The reconstruction function in figure~\ref{fig:recon} consumes a list of
values, where each value is a sublist, and reconstructs a CS machine state from
each sublist. Again, only the cases involving \delayname and \forcename are
defined. The rest of the function is borrowed from
\ncite{Clements2006Thesis}. The first element of every $\vDFCM_i$ sublist
represents a (quoted) expression that is plugged into the context represented
by the rest of the sublist. The store is reconstructed by retrieving all the
location-value pairs in all the sublists up to the current one. The arguments
to the store-reconstruction function $\reconSname$ may contain duplicate
entries for a location, so a location-value pair is only included in the
resulting store if it does not occur in any subsequent arguments.

\begin{proof}[Lemma~\ref{lemma:annorecon} Proof Sketch]
The proof of lemma~\ref{lemma:annorecon} extends Clements's
 proof~\cite[Section 3.4]{Clements2006Thesis} with cases for \delayname and
 \forcename.  Also we modify the argumentation for the original cases to
 cope with the additional store, but doing so is straightforward because
 the store is simply threaded through. 
\end{proof}

\subsection{Lazy Racket Correctness}

The function $\macroname : e \rightarrow \eDF$ macro-expands a Lazy Racket
program. Because source terms don't include labels, $\macroname$ is undefined
for labeled terms. Its partial inverse $\unmacroname : \CDF \times \store
\rightarrow e$ synthesizes an unlabeled Lazy Racket program from a (CS machine
representation of a) plain Racket program. The expansion and synthesis
functions are defined in figures~\ref{fig:macro} and \ref{fig:unmacro},
respectively.

\begin{figure}[htb]
\begin{align*}
  \macroname : e &\rightarrow \eDF \\
  \macro{\lam{x}{e}}     &= \lam{x}{\macro{e}} \\
  \macro{\app{e_1}{e_2}}  
    &= \app{\force{\macro{e_1}}}{\delay{\macro{e_2}}} \\
  \macro{\prim{p^2}{e_1}{e_2}} 
    &= \prim{p^2}{\force{\macro{e_1}}}{\force{\macro{e_2}}} \\
  \macro{\cons{e_1}{e_2}} 
    &= \cons{\delay{\macro{e_1}}}{\delay{\macro{e_2}}} \\
  \macro{\app{p^1}{e}}    &= \app{p^1}{\force{\macro{e}}} \\
  \macro{\iif{e_1}{e_2}{e_3}} &=
  \iif{\force{\macro{e_1}}}{\macro{e_2}}{\macro{e_3}} \\
  \textrm{otherwise, } \macro{e} &= e
\end{align*}
  \caption{Macro-expanding Lazy Racket.}
  \label{fig:macro}
\end{figure}

\begin{figure}
\begin{align*}
  \unmacroname : \CDF \times \store \rightarrow&\; e \\
  \unmacrodef{\CDF} =&\; e \\
    \textrm{where } &\cs{e}{\store'} = \unmacrotwodef{\CDF} \\
  \\
  \unmacrotwoname : \CDF \times \store \rightarrow&\; \cs{e}{\store} \\
   \unmacrotwodef{\lam{x}{\cDF}} =& \csdef{\lam{x}{e}} \\
     \textrm{where } &\csdef{e} = \unmacrotwodef{\cDF} \\[3pt]
   \unmacrotwodef{\app{\CDF_1}{\CDF_2}}
     =& \cs{\app{e_1}{e_2}}{\store''} \\
     \textrm{where } &\cs{e_1}{\store'} = \unmacrotwodef{\CDF_1} \\
                     &\cs{e_2}{\store''} = \unmacrotwo{\CDF_2}{\store'} \\
   \unmacrotwodef{\prim{p^2}{\CDF_1}{\CDF_2}}
     =& \cs{\prim{p^2}{e_1}{e_2}}{\store''} \\
     \textrm{where } &\cs{e_1}{\store'} = \unmacrotwodef{\CDF_1} \\
                     &\cs{e_2}{\store''} = \unmacrotwo{\CDF_2}{\store'} \\
   \unmacrotwodef{\cons{\CDF_1}{\CDF_2}}
     =& \cs{\cons{e_1}{e_2}}{\store''} \\
     \textrm{where } &\cs{e_1}{\store'} = \unmacrotwodef{\CDF_1} \\ 
                     &\cs{e_2}{\store''} = \unmacrotwo{\CDF_2}{\store'} \\
   \unmacrotwodef{\app{p^1}{\CDF}}
     =& \cs{\app{p^1}{e}}{\store'} \\
     \textrm{where } &\cs{e}{\store'} = \unmacrotwodef{\CDF} \\[3pt]
   \unmacrotwodef{\iif{\CDF_1}{\cDF_2}{\cDF_3}}
     =& \cs{\iif{e_1}{e_2}{e_3}}{\store'} \\
     \textrm{where } &\cs{e_1}{\store'} = \unmacrotwodef{\CDF_1} \\ 
                    &\cs{e_2}{\store'} = \unmacrotwo{\cDF_2}{\store'} \\
                   &\cs{e_3}{\store'} = \unmacrotwo{\cDF_3}{\store'} \\
   \unmacrotwodef{\delay{\cDF}} =&\; \unmacrotwodef{\cDF} \\
   \unmacrotwodef{\locdef} =&\; \unmacrotwodef{\storelookupdef} \\
   \unmacrotwodef{\force{\CDF}} =&\; \unmacrotwodef{\CDF} \\
   \unmacrotwodef{\forceEDFdef{\CDF}}
     =& \cs{e}{\storeupdate{\store'}{\ell}{e}} \\
     \textrm{where } &\cs{e}{\store'} = \unmacrotwodef{\CDF} \\
 \textrm{otherwise, } \unmacrotwodef{\cDF} =& \csdef{\cDF}
\end{align*}
\caption{Synthesizing Lazy Racket from plain Racket.}
\label{fig:unmacro}
\end{figure}

Lemma~\ref{lemma:transition} states the correctness of Lazy Racket in terms of
the functions $\macroname$, $\unmacroname$, $\unlabelname$, and the \lambdaLR
rewriting system from section~\ref{sec:lambdalr}. That is, every CS machine
transition sequence has an equivalent \lambdaLR rewriting sequence, modulo
$\unmacroname$ and $\unlabelname$.

\begin{lemma}[LR Correctness]   \label{lemma:transition}
For all Lazy Racket programs $e$ and Racket programs \CDF such that
 $\csstart{\macro{e}} \cssteps \csdef{\CDF}$, there exists a Lazy Racket
 program $\eLR$ such that $e \multisteplr \eLR$ and 
 $\unmacrodef{\CDF} = \unlabel{\eLR}$.
\end{lemma}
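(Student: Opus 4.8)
The plan is to prove Lemma~\ref{lemma:transition} by induction on the length of the CS machine transition sequence $\csstart{\macro{e}} \cssteps \csdef{\CDF}$, establishing a simulation between CS machine configurations and \lambdaLR terms. The bridge between the two worlds is the synthesis function $\unmacroname$ together with $\unlabelname$: I would strengthen the statement into an invariant asserting that at every machine step the synthesized, unlabeled term $\unmacrodef{\CDF}$ agrees with the unlabeling of some reachable \lambdaLR term $\eLR$. The base case is immediate, since $\csstart{\macro{e}}$ synthesizes back to $e$ (one checks that $\unmacroname$ undoes $\macroname$ on an unevaluated program with empty store, using the \delaycs/\forcename clauses of figure~\ref{fig:unmacro}), and $e \multisteplr e$ trivially.

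For the inductive step, the heart of the argument is a case analysis on which CS transition (figure~\ref{fig:cstransitions}) fires, matching each against the corresponding \lambdaLR rule (figure~\ref{fig:lr}). The correspondence is natural: $\beta_v$ matches \betalr, \primcs matches \primlr, the \carcs/\cdrcs/\iftruecs/\iffalsecs transitions match their \lambdaLR counterparts, and the \conslr rule corresponds to the label-introducing behaviour of \macroname{}'s \consname clause. The genuinely interesting transitions are those that manipulate the store, namely \delaycs, \forcedelaycs, \forceupdatecs, and \forcenondelaycs. Here I expect to lean on the structural design already flagged in the text: evaluation under a $\forceEDFdef{\holeE}$ context \emph{corresponds to evaluation under a label} in \lambdaLR, and the $\unmacrotwoname$ clause for $\forceEDFdef{\CDF}$ performs exactly a store update $\storeupdate{\store'}{\ell}{e}$. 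The key insight is that a single machine transition need not correspond to a single \lambdaLR step; because $\unmacroname$ discards \delayname wrappers and dereferences locations through the store, several administrative transitions (forcing a location, retrieving the thunk, forcing nested \delayname{}s) collapse to the \emph{same} synthesized term, so they are matched by zero \lambdaLR steps, while a \forceupdatecs that finally installs a value corresponds to the parallel update captured by $\substlab{\cdot}{\ell}{\cdot}$. This is precisely why the lemma quantifies over $\multisteplr$ rather than single steps.

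The main obstacle will be establishing that the \forceupdatecs store update synthesizes to a \lambdaLR parallel reduction correctly --- that is, that updating \emph{one} store entry $\storeupdate{\store'}{\ell}{e}$ and then dereferencing it everywhere through $\unmacroname$ yields exactly the simultaneous contraction of all $\ell$-labeled subterms performed by the update function of figure~\ref{fig:update}. Because a location may occur many times in the control string, a single store mutation models the sharing that \lambdaLR expresses via consistent labeling (Definition~\ref{prop:samelab}); I must verify that $\unmacroname$ read-through of the store implements the same all-at-once substitution as $\substlab{\eLR_1}{\ell}{\eLR_2}$, and that the labels assigned by \betalr are in bijection with the store locations allocated by \delaycs. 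Maintaining this location-to-label correspondence as an explicit part of the induction hypothesis, and invoking Lemma~\ref{lem:samelab} to know that the \lambdaLR side stays consistently labeled, is where the delicate bookkeeping lives.

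Two secondary complications deserve attention. First, $\unmacroname$ is only a \emph{partial} inverse of $\macroname$, so I must confirm that every reachable control string $\CDF$ lies in its domain; this follows from a separate invariant that machine-reachable terms are macro-expansions of Lazy Racket terms with locations substituted for delayed arguments, preserved by every transition. Second, because $\unmacroname$ threads and occasionally updates the store while synthesizing, I would record that the synthesized store stabilizes appropriately, so that reading a location before versus after a \forceupdatecs differs by exactly the contracted value. Once these invariants are in place, assembling the $\multisteplr$ witness $\eLR$ for the target configuration is a matter of concatenating the per-transition rewriting fragments supplied by the case analysis.
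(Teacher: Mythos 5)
Your skeleton matches the paper's proof almost clause for clause: induction on the length of the machine sequence, the base case $\unmacro{\macro{e}}{\emptystore} = e$, case analysis on the last transition, and a zero-step treatment of administrative store transitions. But there is one concrete misstep, and it sits exactly where you locate your ``main obstacle'': you match \forceupdatecs{} against the parallel update $\substlab{\cdot}{\ell}{\cdot}$ of figure~\ref{fig:update}. That case cannot be carried out as planned, for two reasons. First, \lambdaLR{} has no standalone update step: the update function is only ever invoked as part of a contraction (the third column of figure~\ref{fig:lr}), never by itself, so there is no rewriting rule for \forceupdatecs{} to correspond to. Second, by your own observation the $\forceEDFdef{\CDF}$ clause of $\unmacroname$ already writes the synthesized expression into the store \emph{before} synthesizing the surrounding context; consequently the configurations immediately before and after a \forceupdatecs{} transition synthesize to the very same term, and the paper accordingly files \forceupdatecs{} together with \delaycs, \forcedelaycs, and \forcenondelaycs{} as the zero-step cases. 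Your plan is thus internally inconsistent: given the on-the-fly store update you endorse, the step you designate as the parallel reduction is invisible.

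The verification you correctly identify as the crux---that store read-through in $\unmacroname$ implements the same all-at-once substitution as $\substlab{\eLR_1}{\ell}{\eLR_2}$---belongs instead to the ordinary contraction cases. When a $\beta_v$, \primcs, \carcs, \cdrcs, \iftruecs, or \iffalsecs{} transition fires \emph{inside} a $\forceEDFdef{\holeE}$ context, the matching \lambdaLR{} step must be the under-label, third-column variant of the corresponding rule, in which the contraction and the simultaneous rewriting of all other $\ell$-labeled occurrences happen in a single step; because $\unmacroname$ installs the partially reduced expression at $\ell$ before synthesizing the context, every occurrence of $\ell$ in the synthesized term already reflects the partial reduction, which is precisely what the column-three step produces. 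This is the heart of the paper's argument (it proves synthesis of context and redex separately, noting that only $\forceEDFdef{\CDF}$ subterms can change the synthesized store), and your second paragraph instead treats $\beta_v$ and friends as plain column-two matches---under which the invariant breaks mid-force, since the machine side shows all shared occurrences updated while a column-two \lambdaLR{} step would update only the redex. Repairing your proposal requires no new machinery, only relocating the $\substlab{\cdot}{\ell}{\cdot}$ verification from the \forceupdatecs{} case to the contraction-under-\forcename{} cases; but as written, that branch of your case analysis would fail.
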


\begin{proof}[Proof Sketch] We prove the lemma by induction on the number
 of CS machine steps. For the base case, the lemma holds because
 $\unmacro{\macro{e}}{\emptystore} = e$. Otherwise, we proceed by case
 analysis on the last transition step.

For each case, we prove correct synthesis of evaluation contexts and redexes
separately.  When the rewriting of a redex affects the context, too---see
figure~\ref{fig:lr}, third column---we need information about the redex for the
synthesis of a context. This parallel rewriting of \lambdaLR programs is
equivalent to the reduction of stored expressions in the CS machine if there
are multiple references to that expression. If the stored expression is a
value, the reconstruction naturally reifies the value throughout because
$\unmacroname$ translates locations by retrieving and unexpanding the
expression at that store location.

If a stored expression is in the process of being reduced to a value, however,
the store has not been updated but all references to this location must reflect
the partial reductions. Such intermediate steps manifest themselves as
reductions under $\forceEDFdef{\holeE}$ contexts in plain Racket. To synthesize
these intermediate steps properly, $\unmacroname$ first updates the store with
the partially reduced expression and then synthesizes the rest of the state
into a Lazy Racket program. As a result, the translation of all occurrences of
a location yield the desired intermediate expression. Only synthesis of
$\forceEDFdef{\CDF}$ control strings yields a new store; thus only
subexpressions that possibly contain $\forceEDFdef{\CDF}$, i.e., subexpressions
that can contain the redex, yield a new store.

For synthesis of a $\beta_v$ redex, the substituted argument is a location
 because $\macroname$ wraps all application arguments in a \delayname. The
 $\unmacroname$ synthesis function translates locations to the delayed
 expression at that location, so the corresponding \lambdaLR step is
 $\betalr$. The $\betalr$ step adds a label, which is removed with
 $\unlabelname$.

For synthesis of $\primcs$, $\carcs$, $\cdrcs$, $\iftruecs$, or $\iffalsecs$
redexes, $\unmacroname$ yields a $\primlr$, $\carlr$, $\cdrlr$, $\iftruelr$, or
$\iffalselr$ \lambdaLR redex, respectively, and these rewriting steps insert no
labels.

If the last CS step, $\csdef{\CDF} \csstep \cs{{\CDF}'}{\store'}$, follows
$\delaycs$, $\forcedelaycs$, $\forceupdatecs$, or $\forcenondelaycs$, there is
no corresponding \lambdaLR rewriting rule and $\unmacrodef{\CDF} =
\unmacro{{\CDF}'}{\store'}$. The lemma holds since this is equivalent to taking
zero steps.\end{proof}

\section{Performance} \label{sec:performance}

The performance of a stepper tool must be measured against the programmer's
ability to use the tool. In particular, raw performance numbers are
inconsequential because the usability of the tool primarily depends on its
responsiveness to I/O. 

Nevertheless, it is important to quantify the basic performance penalty of
 a stepper. In our case, the stepper slows down the execution of programs
 by a factor of 20 (up to 60), as the following table for a small number of
 micro-benchmarks shows:
\begin{center}
  \begin{tabular}{|c|c|}
    \hline
    \textbf{Test Name} & \textbf{Slowdown Factor} \\
    \hline
    \hline
    \texttt{fibo} & 21.4 \\
    \hline
    \texttt{ack}  & 32.7 \\
    \hline
    \texttt{partial} & 27.3 \\
    \hline
    \texttt{tak}  & 23.0 \\
    \hline
    \texttt{takl} & 34.9 \\
    \hline
    \texttt{takr} & 55.5 \\
    \hline
  \end{tabular}
\end{center}
 The numbers describe the performance ratio of annotated versus unannotated
 programs. While they include both annotation and evaluation time,
 the annotation time is negligible when compared to the time it takes to
 run the program.

We achieve adequate responsiveness of the stepper with a straightforward
arrangement. As soon as the stepper backend produces output, the front-end
asynchronously displays reduction steps. By the time the programmer has read
and understood the first step, the stepper can quickly respond to additional
requests.

\section{Related Work}

Debugging programs in lazy languages poses serious challenges, as numerous
attempts have shown over the past three decades.  Most recent debuggers are for
Haskell. \ncite{Allwood2009Needle} developed a stack-trace-generating tool for
GHC that aids in identifying the context of an error. Their tool transforms
programs to carry around an extra stack parameter for accessing the program
context, similar to continuation marks. Their tool approximates a call-by-value
stack, however, instead of displaying the lazy evaluation order.

\ncite{Ennals2003HsDebug} built HsDebug for GHC, which employs the ``stop,
examine, continue'' style found in imperative languages, e.g., gdb. With
HsDebug, programmers can set breakpoints and can examine the program state at
breakpoints. Like Allwood's debugger, HsDebug does not preserve laziness in
programs; the debugger must make certain tweaks, resulting in an ``optimistic''
evaluation model~\cite{Ennals2003Optimistic}.

The most recent version of the GHC system has debugging features built into its
interpreter~\cite{Marlow2007Debugger}. In principle, the GHCi debugger can most
closely mimic the operation of our stepper because: (1) it allows a programmer
to single-step through the evaluation of a program, and (2) it preserves
laziness for the user to observe. In contrast to our stepper, the GHCi debugger
does not use a substitution semantics, and only displays a few lines of the
program at a time, resulting in frequent jumps from the body of a function to
different call sites, when arguments need to be evaluated. This jumping can be
confusing for a programmer to follow, especially a novice to lazy
evaluation. Also, sharing is not easily observed in the GHCi debugger. All
thunks are rendered the same way (as a ``\_'' character), and when stepping,
the debugger skips over the reduction of a variable, so it is occasionally
unclear which argument is being evaluated.

Hat 2.0~\cite{Chitil2003Hat,Wallace2001NewHat} is a suite of debugging tools
for GHC that aggregates and improves on several previous tools: Hat
1.0~\cite{Sparud1997Hat1}, Hood~\cite{Gill2000Hood}, and
Freja~\cite{Nilsson1998Freja}. Hat's implementation resembles the
implementation of our stepper in that it works by transforming a source program
into an annotated one, which, when run, generates trace information. The
generated trace is then interpreted by the tools. In Hat, the generated trace
can be viewed in several different styles: as a directed graph of expressions
connected according to redex-contractum relations (like the original Hat), in
the question-answer style of an an algorithmic debugger (like Freja), or by
tracking specific values in a computation (like Hood). The graph viewer is
somewhat similar to our tool; however, program evaluation is portrayed using
graph semantics, and when viewing the graph, it can be difficult to deduce the
order of reductions for graphs larger than a few nodes and links. Hat also
includes a viewer, hat-stack, that shows a simulated eager evaluation stack,
similar to Allwood's tool. Hat is no longer maintained and it is not clear if
it still works with the more recent version of GHC.

Finally, only one of these tools come with formal models and correctness
 proofs for their architecture.  Chitil and Luo \cite{Chitil2006HatModel}
 developed a model for Hat's trace generator and show that the evaluation
 steps can be reconstructed from the information in the traces.

\section{Conclusion}

We have presented a lazy stepper as an additional tool for the lazy language
debugging arsenal. The stepper presents computation as the standard rewriting
sequence of a novel lazy semantics. While the stepper is implemented in Lazy
Racket via a ``macro'' over the existing stepper for strict Racket, our paper
explains the general software architecture via a generic theoretical model. We
conjecture that it is straightforward to construct a stepper on top of other
architectures, like Allwood's StackTrace.


\bibliographystyle{abbrvnat}


\bibliography{lazystep}

%
%

\end{document}